\def\ps@headings{%
\def\@oddhead{\mbox{}\scriptsize\rightmark \hfil \thepage}%
\def\@evenhead{\scriptsize\thepage \hfil \leftmark\mbox{}}%
\def\@oddfoot{}%
\def\@evenfoot{}}
\newcommand{\qed}{\rule{2mm}{3mm}}
\newtheorem{theorem}{Theorem}[section]
\newtheorem{remark}[theorem]{Remark}
\newtheorem{proposition}[theorem]{Proposition}
\newtheorem{lemma}[theorem]{Lemma}
\newtheorem{corollary}[theorem]{Corollary}
\newtheorem{assumption}{Assumption}
\newtheorem{proof}{Proof}
\let\OLDthebibliography\thebibliography
\renewcommand\thebibliography[1]{
  \OLDthebibliography{#1}
  \setlength{\parskip}{0pt}
  \setlength{\itemsep}{0pt plus 0.3ex}
}
\newcommand{\rev}[1]{{\color{blue}#1}} 
\newcommand{\com}[1]{\textbf{\color{red}(COMMENT: #1)}} 
\newcommand{\clar}[1]{\textbf{\color{green}(NEED CLARIFICATION: #1)}}
\newcommand{\response}[1]{\textbf{\color{magenta}(RESPONSE: #1)}} 
\newcommand{\rev}[1]{#1}
\newcommand{\com}[1]{}
\newcommand{\clar}[1]{}
\newcommand{\response}[1]{}
\begin{document}
\title{To Stay Or To Switch: Multiuser Dynamic Channel Access 
\thanks{Y. Liu and M. Liu are with the Electrical Engineering and Computer Science Department,  University of Michigan, Ann Arbor, \{youngliu, mingyan\}@umich.edu. A preliminary version of this paper appeared in INFOCOM in April 2013.  This work was partially supported by the NSF under grants CIF-0910765 and CNS-1217689, and the ARO under Grant W911NF-11-1-0532.}
}
\author{
\IEEEauthorblockN{Yang Liu and Mingyan Liu}\\
}

\maketitle
\pagestyle{plain}
\begin{abstract}
In this paper we study opportunistic spectrum access (OSA) policies in a multiuser multichannel random access cognitive radio network, where users perform channel probing and switching in order to obtain better channel condition or higher instantaneous transmission quality.  However, unlikely many prior works in this area, including those on channel probing and switching policies for a single user to exploit spectral diversity, and on probing and access policies for multiple users over a single channel to exploit temporal and multiuser diversity, in this study we consider the collective switching of multiple users over multiple channels.  In addition, we consider finite arrivals, i.e., users are not assumed to always have data to send and demand for channel follow a certain arrival process.  Under such a scenario, the users' ability to opportunistically exploit temporal diversity (the temporal variation in channel quality over a single channel) and spectral diversity (quality variation across multiple channels at a given time) is greatly affected by the level of congestion in the system. We investigate the optimal decision process in this case, and evaluate the extent to which congestion affects potential gains from opportunistic dynamic channel switching.

\end{abstract}
\vspace{1em}

\begin{IEEEkeywords}
Opportunistic Spectrum Access (OSA), cognitive radio network, diversity gain, multiuser multichannel system, optimal stopping rule
\end{IEEEkeywords}
\section{Introduction}\label{sec:intro}

Dynamic and Opportunistic Spectrum Access (OSA) policies have been very extensively studied in the past few years for cognitive radio networks, against the backdrop of spectrum open access as well as advances in ever more agile radio transceivers, including e.g., highly efficient channel sensing techniques \cite{5454399,4336106}.  Within this context, a cognitive radio is capable of quickly detecting spectrum quality and performing channel switching so as to obtain good channel and transmission quality.  At the heart of such opportunistic spectrum access is the idea of improving spectrum efficiency through the exploitation of {\em diversity}.

Within this context there are three types of diversity gains commonly explored.  The first is {\em temporal diversity}, where the natural temporal variation in the wireless channel causes a user to experience or perceive different transmission conditions over time even when it stays on the same channel, and the idea is to have the user access the channel for data transmission when the condition is good, which may require and warrant a certain amount of waiting.  Studies like \cite{DBLP:journals/ton/ChangL09} investigate the tradeoff involved in waiting for a better condition and when is the best time to stop.

The second is {\em spectral diversity}, where different channels experience different temporal variations, so for a given user at any given time a set of channels present different transmission conditions.  The idea is then to have the user select a channel with the best condition at any given time for data transmission, which typically involves probing multiple channels to find out their conditions.  Protocols like \cite{Kanodia_moar:a} does exactly this, and studies like \cite{DBLP:journals/jsac/ZhaoTSC07,Ahmad:2009:OMS:1669634.1669645} further seek to identify the best sequential probing policies using a decision framework.

The third is {\em user diversity} or {\em spatial diversity}, where the same frequency band at the same time can offer different transmission qualities to different users due to their difference in transceiver design, geographic location, etc. The idea is to have the user with the best condition on a channel use it.  This diversity gain can be obtained to some degree by using techniques like stopping time rules whereby a user essentially judges for itself whether the condition is sufficiently good before transmitting, which comes as a byproduct of utilizing temporal diversity.

We note that the above forms of diversities are often studied in isolation.  For instance, temporal diversity is studied in a multiuser setting but with a single channel in \cite{Zheng:2009:DOS:1669334.1669354,Tan:2010:DOS:1833515.1833882}; spectral diversity is analyzed for a single user in \cite{Shu:2009:TSC:1614320.1614325}, among others.
More specifically, \cite{Zheng:2009:DOS:1669334.1669354} used temporal diversity in a multi-user setting and developed optimal stopping policies \cite{uclastopping}. \cite{Tan:2010:DOS:1833515.1833882} considered a distributed opportunistic scheduling problem for ad-hoc communications under delay constraints.
In \cite{Shu:2009:TSC:1614320.1614325} authors exploited spectral diversity in OSA for a single user with sensing errors,  
where the multi-channel overhead is captured by a generic penalty on each channel switching.  This becomes insufficient in a multi-user setting as such overhead will obviously depend on the level of congestion in the system that results in different amount of collision and the time it takes to regain access to a channel.
%
In \cite{Kanodia_moar:a} an opportunistic auto rate multi-channel MAC protocol MOAR is presented to exploit spectral diversity for a multi-channel multi-rate IEEE 802.11-enabled wireless ad hoc network.  However, this scheme does not allow parallel use of multiple channels by different users due to its reservation mechanism.
Other works that study multi-channel access by a single user include \cite{DBLP:journals/ton/ChangL09,4723352,Ahmad:2009:OMS:1669634.1669645,Liu_indexabilityof,citeulike:6090701,4604743}.

As the number of users and their traffic volume increase in such a multi-channel system, one would expect their ability to exploit the above diversity gains to decrease significantly due to the increased overhead, e.g., the time it takes to perform channel sensing or the time it takes to regain access right, or increased collision due to channel switching.
As mentioned above, this overhead was captured in the form of penalty cost in prior work such as \cite{Shu:2009:TSC:1614320.1614325}, but is often assumed to be independent of the traffic volume existing in the system.

With the above in mind, in this paper we set out to study opportunistic spectrum access policies in a multiuser multichannel random access setting, where users are not assumed to always have data to send, 
demand for channel follows a certain arrival process, and collision and competition times are taken into account. Our focus is on the effect of collective switching decisions by the users, and how their decision process, in particular their channel switching decisions, are affected by increasing congestion levels in the system.
%

Toward this end we characterize the nature of an optimal access policy and identify conditions under which channel switching actually results in transmission gain (e.g. in terms of average data rate or throughput).  Our qualitative conclusion, not surprisingly, is that with the increase in user/data arrival rate, the average throughput decreases and a user becomes increasingly more reluctant to give up a present transmission opportunity in hoping for better condition later on or in a different channel.  Quantitatively we present algorithms that calculate optimal switching decisions and analyze the stability of the overall system.

The remainder of this paper is organized as follows.
The system model is given in Section \ref{model}. In Sections \ref{iid} and \ref{sec:markov}, we model channel evolution as IID and Markovian processes, respectively, and analyze the properties of an optimal stopping/switching rule.  Numerical results are given in Section \ref{simu}, and Section \ref{conclusion} concludes the paper.

\section{Model, assumptions and preliminaries}\label{model}

\subsection{Model and assumptions}

Consider a wireless system with $N$ channels indexed by the set $\Omega = \{1,2,...,N\}$. We associate each channel with a positive reward of transmission (e.g., transmission rate) $X^j$, which is a positive random variable with distribution given by $f_{X^j}(x)$, assumed to have finite support with a maximum value of $\bar{X}^j$. There are $m$ cognitive users (or radio transceivers) each equipped with a single transmitter attempting to send data to a base station.  Our model also captures direct peer-to-peer communication, where $m$ {\em pairs} of users communicate and each pair can rendezvous and perform channel sensing and switching together through the use of a control channel \cite{Liu12}.  However, for simplicity of exposition, for the rest of the paper we will take the view of $m$ users transmitting to a base station.  We will assume these $m$ users are within a single interference domain, so that at any given time each channel can only be occupied by one user.  Considering spatial reuse will make the problem considerably more challenging and remains an interesting direction of future research.

We consider discrete time with a suitably chosen time unit, and with all other time values integer multiples of this underlying (and possibly very small) unit.  We will consider two channel models, an IID model where channel conditions over time are assumed to form an IID process defined on this time unit to model fast changing channels (in Section \ref{iid}), and a Markovian model where channel conditions over time form a Markov chain for modeling slow changing channels (in Section \ref{sec:markov}).   Different channels are in general {\em not} identically distributed, and evolved independently of each other.  

\begin{figure}[!h]
\centering
                \includegraphics[width=0.7\textwidth,height=0.32\textwidth]{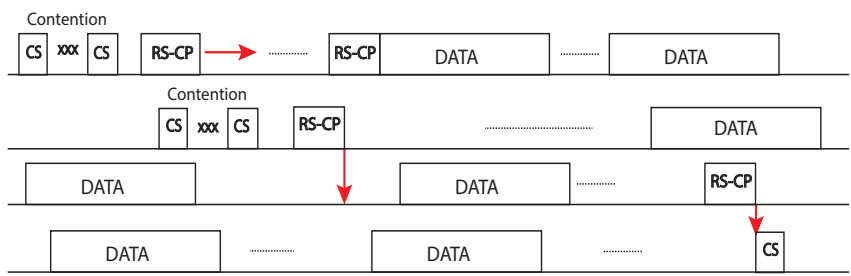}\label{sysmodel}
\caption{System model. 
}
\end{figure}

The system operates in a way similar to a multi-channel random access network like IEEE 802.11, with the following modifications related to dynamic and opportunistic channel access.  Each user has a pre-assigned (or self-generated) random sequence of channels; this sequence determines in which order the user performs channel switching, an approach similar to that used in \cite{Shu:2009:TSC:1614320.1614325}.  More on this assumption is discussed in Section \ref{sec:formulation}.
Each time a user enters a new channel, it must perform carrier sensing (CS) and compete for access (contention resolution) as in a regular 802.11 channel.  As soon as it gains the right to transmit, \rev{the user reserves the channel (e.g., through the use of RTS-CTS type of handshake) and finds out the instantaneous data transmission quality (channel information could be piggybacked on these control packets) it may get if it transmits right away}.  Upon finding out the channel condition, this user faces the following choices:
\begin{enumerate}
\item Transmit on the current channel right away.  Intuitively this happens if the current channel condition is deemed good enough.  This action will be referred to as {\em STOP}.  \rev{This is shown in Fig. \ref{sysmodel}, where the second RS-CP (denoting the Reservation-Channel probing process) followed by DATA indicates a STOP at the first channel (first line in the figure).} 

\item Forego this transmission opportunity, presumably due to poor channel condition, but remain on the same channel and compete for access again in the near future hoping to come across a better condition then.  This happens if the current channel condition is poor but the average quality is believed to be good, so the user will risk waiting for possibly better condition later.  This action will be referred to as {\em STAY}. \rev{This is illustrated by the first RS-CP on the first line (channel) followed by a horizontal red arrow. } 

\item Give up the current channel and switch to the next one on its list/sequence of channels.  This happens if the current channel condition is poor, and the prospect of getting better conditions later by staying on the same channel is not as good as by switching to the next channel.  This action will be referred to as {\em SWITCH}. \rev{An example is shown by the RS-CP on the second line (channel) followed by a vertical red arrow indicating a {\em SWITCH} action. }
\end{enumerate}
Note that option \rev{(2)} above allows the system to exploit both multiuser diversity (the transmission opportunity is given to another user under the random access) and temporal diversity (the user in question waits for better condition to appear in time), while option (3) allows the system to exploit spectral diversity as users seek better conditions on other channels.
These options, in particular (1) and (2) are similar to those used in a stopping time framework, see  e.g., \cite{Zheng:2009:DOS:1669334.1669354}.

In the above decision process once a user decides to leave a channel it cannot use the channel for transmission without going through carrier sensing and random access competition again.  More importantly from a technical point of view, this assumption means that the user cannot claim the same channel condition at a later time.  
Once a user gets the right to transmit on a certain channel, it can transmit for a period of $T$ time units, which is a constant.  For simplicity a single time unit is assumed to be the amount of time to transmit a control packet (\rev{e.g., RTS/CTS type of packets.}). 
\subsection{Capturing the level of congestion}

As mentioned earlier our focus in this paper is on understanding how the users' channel access
decision process is affected by increasing traffic load or congestion in the system. To model this we will first take the view of a single user, and introduce user arrival rates in each channel as well as the amount of delay involved in STAY and SWITCH as parameters that need to be taken into consideration in its decision process.  Note that these parameter values are the result of the collective switching actions of all users, and therefore cannot be obtained prior to defining the switching policies.  We will however assume that these parameters have well-defined averages to facilitate our analysis.  Later on we show that the system under the optimal switching policy converges and that these parameters indeed have well-defined averages, thereby justifying such an assumption.  In other words, policies derived under the assumption that these parameters have well-defined averages lead to a stable system with well-defined averages for these parameters.  This is not unlike the Markov mean-field approach where a single user operates against a background formed by all other users in system over which this single user has no control or influence.   In practice these values may be obtained by users through learning.



Specifically, we assume that the total packet arrivals to a channel, including external arrival, retransmission, as well as arrivals switched from other channels, form a Poisson process, with the attempt rate vector given by
$\mathbf{G} = [G_1,G_2,...,G_N]$ and a sum rate $\sum_{i=1}^N G_i = G$. These quantities will also be referred to as the {\em load} or {\em traffic load} on a channel.  We will not directly deal with the external arrival processes as our analysis entirely depends on the above ``internal'' offered load.  However, we will assume that the external arrivals are such that the system remains stable. 


The level of congestion on any channel is captured by two parameters. 
The first is an average {\em contention delay} on channel $j$ denoted by $t_j^c$; this is the average time from carrier sense to gaining the right to transmit on channel $j$.  The more competing users there are on channel $j$, the higher this quantity is.  
The second is an average {\em switching delay} of channel $j$, denoted by $t_j^s$; this is the time from a user switching into channel $j$ (from another channel) to its gaining the right to transmit on channel $j$. Compared to $t_j^c$ the switching delay includes the additional time for the radio to perform channel switching and additional waiting time in the event that the switching occurs during an active transmission.  \rev{In our characterization of $t_j^s$ below, however, we will ignore the hardware switching delay as it simply adds a constant, very small  compared to contention delay, which will not affect our subsequent analysis.} 

For a packet arriving at channel $i$ (from an external arrival process or by switching from another channel), the delay it experiences between arrival and successful transmission consists of two parts, the average time it takes for the channel to become idle if it happens to arrive during an active transmission (including its associated control packet exchange), denoted by $t^{w}_i$, and the average time it takes to compete for and gain the right to transmit, given by $t^c_i$. 
We thus have $t^s_i = t^{w}_i + t^c_i$. 


Denote by $Y$ the random variable representing the time between a new arrival and the completion of the current transmission. Following results in \cite{Liu12}, we have
$f_Y(y) = S_{i} e^{-S_{i} y}$, where $S^{i}$ is the success rate of channel contention given by
\begin{align}
S_{i} = \frac{G_{i}e^{-2G_{i}}}{1+(1+T)G_{i}e^{-2G_{i}}}~.
\end{align}
$t^{w}_i$ is then calculated as follows:
\begin{align}
&t^{w}_i = \int_{0}^{1+T}f_Y(y)(1/\zeta+y) dy = \frac{1}{S_{i}}+\frac{1}{\zeta}-(T+1+\frac{1}{S_{i}}+\frac{1}{\zeta})e^{-(T+1)S_{i}}~,
\end{align}
where $1/\zeta$ is the expected random backoff time.  For $t^c_i$, since a competition succeeds with probability $e^{-2G_{i}}$ we have
\begin{align}
t^c_i = (e^{2G_{i}}-1)\cdot(1/\zeta+2)+2~.
\end{align}

Using the above expressions, it is not difficult to establish the following results.
\begin{proposition}
\label{assump-t1}
Both $t_{j}^c$ and $t_{j}^s$ are non-decreasing functions of arrival rate $G_j$, $\forall j \in \Omega $. 
\end{proposition}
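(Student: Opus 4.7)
The plan is to verify each claim by direct differentiation of the closed forms for $S_j$, $t_j^c$, and $t_j^w$. For $t_j^c$ the computation is immediate: differentiating $t_j^c = (e^{2G_j}-1)(1/\zeta+2)+2$ gives $\frac{d t_j^c}{d G_j} = 2(1/\zeta+2)e^{2G_j} > 0$, so $t_j^c$ is in fact strictly increasing in $G_j$, which is more than the statement requires.

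For $t_j^s = t_j^w + t_j^c$ the argument is more delicate because $t_j^w$ depends on $G_j$ only through $S_j$, and $S_j$ is itself not monotone in $G_j$. Direct differentiation gives $\frac{d S_j}{d G_j} = \frac{(1-2G_j)e^{-2G_j}}{(1+(1+T)G_j e^{-2G_j})^2}$, which has the sign of $(1-2G_j)$, so $S_j$ rises on $(0,1/2)$ and then decays back toward zero. A similar calculation shows that $t_j^w(s) = 1/s + 1/\zeta - (T+1+1/s+1/\zeta)e^{-(T+1)s}$ is unimodal in $s$, with $t_j^w(0)=0$, $\lim_{s\to\infty} t_j^w(s) = 1/\zeta$, and a single interior maximum. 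Consequently the composition $G_j \mapsto t_j^w(S_j(G_j))$ is not globally monotone, and on some range of $G_j$ the waiting-time component can actually decrease.

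To close the argument I would apply the chain rule $\frac{d t_j^s}{d G_j} = \frac{d t_j^w}{d S_j}\cdot\frac{d S_j}{d G_j} + \frac{d t_j^c}{d G_j}$ and show that the exponential growth of $\frac{d t_j^c}{d G_j} = 2(1/\zeta+2)e^{2G_j}$ dominates any negative contribution from the first term. Since $t_j^w$ and $dt_j^w/dS_j$ are uniformly bounded (both $t_j^w$ and its derivative in $s$ sit in bounded ranges, as can be read off from the closed form), and $|d S_j/d G_j|$ decays like $e^{-2G_j}$, any negative contribution from the chain-rule product is at most $O(e^{-2G_j})$, which is crushed by the $e^{2G_j}$ term. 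The main obstacle is packaging these bounds into an explicit pointwise inequality; I expect a case split between $G_j \le 1/2$ (where $dS_j/dG_j \ge 0$ and the decomposition is already sign-definite) and $G_j > 1/2$ (where one must use the exponential gap explicitly), together with elementary estimates on the exponentials, to suffice.
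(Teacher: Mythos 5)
A preliminary remark: the paper never actually proves this proposition --- it is stated immediately after the closed forms for $t^c_i$ and $t^w_i$ with the comment that it ``is not difficult to establish,'' and no proof appears in the appendices. So your attempt has to be judged on whether it would actually close. Your treatment of $t_j^c$ is correct and complete ($\frac{d t_j^c}{dG_j}=2(1/\zeta+2)e^{2G_j}>0$), and you correctly identify the real difficulty: $S_j$ peaks at $G_j=1/2$ and then decays, so $t_j^w(S_j(G_j))$ eventually decreases and the sum $t_j^s=t_j^w+t_j^c$ must be rescued by $t_j^c$.

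The gap is in the domination step. Your bound is ``$|dt_j^w/dS_j|$ uniformly bounded, $|dS_j/dG_j|=O(e^{-2G_j})$, hence the negative term is crushed by $e^{2G_j}$.'' But with $L=T+1$ one has $\frac{dt_j^w}{ds}=\int_0^L(1/\zeta+y)(1-sy)e^{-sy}\,dy$, whose supremum (attained as $s\to0^+$) is $L/\zeta+L^2/2$; the ``uniform bound'' grows like $T^2$. The critical region is not $G_j\to\infty$ but $G_j\approx 1$, where $|dS_j/dG_j|$ is maximal and $2(1/\zeta+2)e^{2G_j}$ is only of order $e^2$ and does not grow with $T$. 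So the product of your two bounds, (constant of order $T^2$)$\,\times\,O(e^{-2})$, can vastly exceed $2(1/\zeta+2)e^{2}$ for large $T$ (e.g.\ already at $T=40$, the value used in the paper's simulations): ``crushed by $e^{2G_j}$'' is an asymptotic statement, not the required pointwise inequality. The proposition is nevertheless true, and the repair uses a cancellation your estimates discard. Since $s=\frac{u}{1+Lu}$ with $u=G_je^{-2G_j}$, the reachable range of $s$ is $(0,1/L)$, so $sL<1$; writing $\frac{dt_j^w}{ds}=\frac{L}{\zeta}e^{-sL}+\frac{1}{s^2}\Phi(sL)$ with $\Phi(A)=\int_0^A x(1-x)e^{-x}dx\in(0,\,0.104]$ for $A\in(0,1]$, one sees first that $t_j^w$ is strictly increasing on the reachable range (so the case $G_j\le 1/2$ is immediate and no unimodality discussion is needed), and second that for $G_j>1/2$ the factor $1/s^2=(1+Lu)^2/u^2$ exactly cancels the $(1+Lu)^2$ in the denominator of $\frac{dS_j}{dG_j}$, yielding $\frac{\Phi(sL)}{s^2}\bigl|\frac{dS_j}{dG_j}\bigr|=\Phi(sL)\frac{2G_j-1}{G_j^2}e^{2G_j}\le 0.104\,e^{2G_j}$, while $\frac{L}{\zeta}e^{-sL}\bigl|\frac{dS_j}{dG_j}\bigr|\le\frac{2G_j-1}{4\zeta G_j}\le\frac{1}{2\zeta}$ using $(1+Lu)^2\ge 4Lu$. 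Both are strictly below $2(1/\zeta+2)e^{2G_j}$, which gives the claim with constants independent of $T$. Without tracking this cancellation, your proposed case split and exponential estimates do not suffice.
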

\begin{proposition}\label{assump-t2}
Both $t^c_{j}$ and $t^s_{j}$ are non-decreasing functions of the data transmission time $T$, $\forall j \in \Omega$.
\end{proposition}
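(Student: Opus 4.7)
The plan is to reduce the claim to a one-variable calculus exercise. First, observe that the closed-form expression for $t^c_j$ given in the excerpt, $t^c_j = (e^{2G_j}-1)(1/\zeta+2)+2$, contains no $T$ at all; it depends only on $G_j$ and $\zeta$. Hence $t^c_j$ is trivially constant (in particular, non-decreasing) in $T$. Showing that $t^s_j = t^w_j + t^c_j$ is non-decreasing in $T$ therefore reduces to showing the same property for $t^w_j$.

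For $t^w_j$, the key algebraic observation is that
\[
\frac{1}{S_j} \;=\; \frac{1}{G_j e^{-2G_j}} + 1 + T,
\]
so $1/S_j$ is affine in $T$ with slope $1$; setting $\beta = 1/S_j$ gives the very clean relations $d\beta/dT = 1$ and $dS_j/dT = -S_j^2$. The formula for $t^w_j$ then reads
\[
t^w_j \;=\; \beta + \frac{1}{\zeta} - \left(T+1+\beta+\frac{1}{\zeta}\right) e^{-(T+1)/\beta},
\]
and I would differentiate with respect to $T$ directly, using $d\beta/dT=1$ and carefully grouping the exponential contributions. My target is to recast the derivative so that, up to a strictly positive factor, it reduces to an expression of the form $1-(1+x)e^{-x}$ with $x = (T+1)/\beta \ge 0$, which is non-negative on $[0,\infty)$ by a one-line Taylor argument. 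Monotonicity of $t^w_j$ in $T$, together with the trivial monotonicity of $t^c_j$, then gives the statement for $t^s_j$.

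The main obstacle I expect is the algebraic bookkeeping: the chain rule mixes positive and negative pieces (the direct $T$-dependence inside the exponent contributes positively, while the dependence through $S_j$ pulls in both directions), and the resulting inequality compares two quantities that both grow with $T$. Should the calculation prove unwieldy, a robust fallback is a probabilistic coupling: lengthening $T$ stochastically lengthens every ongoing transmission and, by the formula for $S_j$, simultaneously lowers the per-contention success rate; under a joint coupling of the Poisson arrival process, a new arrival facing transmission length $T'>T$ waits stochastically longer than one facing $T$, yielding monotonicity of $t^w_j$ (and hence of $t^s_j$) without an explicit derivative computation.
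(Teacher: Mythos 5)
Your approach is correct and is precisely the direct verification from the closed-form expressions that the paper intends (the paper omits the proof entirely, saying only that the result is ``not difficult to establish''): $t^c_j$ indeed has no $T$-dependence, and the whole burden falls on $t^w_j$. For the record, your differentiation does close, just with a slightly different kernel than you anticipated: writing $u=T+1$, $a=1/(G_je^{-2G_j})$, so that $1/S_j=u+a$ and $x=(T+1)S_j=u/(u+a)\in(0,1)$, one gets $t^w_j=u+a+\tfrac1\zeta-(2u+a+\tfrac1\zeta)e^{-x}$ and
\[
\frac{d\,t^w_j}{dT}\;=\;1-(1+x^2)e^{-x}+\frac{(1-x)e^{-x}}{\zeta\,(u+a)}\;\ge\;0,
\]
where the first part is nonnegative because $g(x)=(1+x^2)e^{-x}$ satisfies $g(0)=1$ and $g'(x)=-(x-1)^2e^{-x}\le0$, and the second is nonnegative because $x\le1$; so the clean target is $1-(1+x^2)e^{-x}$ rather than your $1-(1+x)e^{-x}$, although the latter is a valid lower bound on $[0,1]$ and would also suffice.
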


The decision process we introduce next is a function of $t^c_i$ and $t^s_i$, so a user needs to know these parameter values in order to compute the optimal policy.  In practice, this information may be obtained through measurement and empirical means. 


%

\subsection{Problem formulation}\label{sec:formulation}

For simplicity and without loss of generality, for the single user under consideration we will relabel the channels in its sequence in the ascending order: $1, 2, \cdots, N$.
We now define the following rate-of-return problem with the objective of maximizing the effective data rate over one successful data transmission.

Specifically, let $\pi$ denote a policy $\pi=\{ \alpha_1, \alpha_2, \cdots \alpha_{\gamma(\pi)}\}$ which specifies the sequence of actions leading up to a successful transmission, with $\alpha_k$ denoting the $k$-th action, $\alpha_k \in \{\mbox{STAY, SWITCH}\}$, $k=1, \cdots, \gamma(\pi)-1$, and $\alpha_{\gamma(\pi)}=\mbox{STOP}$.  An action is only taken upon gaining the right to transmit in a channel, and $\gamma(\pi)$ denotes the stopping time at which the process terminates with a transmission action.

Let $X_{\gamma(\pi)}^\pi$ denote the data rate obtained at the last step when the process terminates.  Then the total reward the user gets is $X_{\gamma(\pi)}^\pi \cdot T$, the total amount of data transmitted.  A natural goal would be to maximize the ratio between this reward and the total amount of time spent in the decision process (summing up the delays involved in switching and contention as a result of the actions), i.e., the effective or average throughput or data rate. 
%
While this appears to be a standard rate-of-return problem, an inherent difficulty arises from the fact that different channels have different statistics, and thus the rewards generated and the delays experienced, respectively, are not independent across channels.  This prevents the use of the renewal theorem to turn the expectation of the aforementioned ratio (average throughput) into a ratio of expectations as is commonly done. 

To address this difficulty, we will make the following simplification: instead of maximizing the overall rate of return for each successful transmission over the entire decision process, we will seek to maximize the rate of return over the {\em remaining} decision process given the current state of the process. This may be viewed as a ``no-recall'' approximation to the original goal by ignoring the history or past decisions in the same process.  This objective can be represented by the following dynamic program, noting that the user goes through the channels in the order $1, 2, \cdots, N$. 
\begin{align}
V_{N}(x) &= \max \left\{x, 
\frac{T}{T+t^c_N}E \{V_{N}(X^N|x)\}\right\}\nonumber \\
V_{i}(x) &= \max \left\{x, 
\frac{T}{T+t^c_i}E \{{V}_{i}(X^i)|x\},
\frac{T}{T+t^s_{i+1}}E \{{V}_{i+1}(X^{i+1})|x\}\right\} \label{pf}, ~~~ i<N, 
\end{align}
where $V_{i}(x)$ is the value function at stage $i$ (in channel $i$) of the decision process when the observed channel state is $x$; this is also the maximum average throughput obtainable given current state $x$ (transmission rate) in channel $i$: 
In the above equation, the first term is the reward (current transmission rate) if we STOP, the second the expected reward ( if we STAY, 
and the last the expected reward if we SWITCH. 

The optimal decision process defined by (\ref{pf}) appears to be a finite horizon problem, i.e., the process stops at channel (or stage) $N$.  However, this is only partly true.  (\ref{pf}) actually illustrates a {\em two-dimensional} decision problem, where there is a finite number of steps ($N$) along the spectral dimension (the channels), but within each channel (for each $i$) the decision process is over an infinite horizon along the time dimension, i.e., the decision process may go on indefinitely within a particular channel.  This will be seen more clearly in Section \ref{iid}.

The reason we have limited the horizon to be finite along the spectral dimension -- the infinite horizon version would be where the user can continue to switch channels for an indefinite number of times, including revisiting channels it has visited in the past -- has to do with the IID assumption on the channels.  Since channel state realizations are independent over time (for the same channel), the second and third terms in (\ref{pf}) are both independent of the current state $x$.  In other words, the comparison between the second and the third terms is independent of the current state $x$, suggesting that if the second term is larger than the third term, then it will always be larger regardless of the current state.  The interpretation of this observation is that if we ever decide to STAY (the second term is larger) on the same channel, then we will never SWITCH (the third term is larger) later. The opposite is also true: if we ever decide to SWITCH away from a channel (the third term is larger), then under the optimal policy we will never come back to the same channel even if we are allowed to.  This means that under the objective of maximizing the future rate of return, we will never visit more than once each channel, resulting in the finite horizon along the spectral dimension given above; there is no need to allow the user to revisit a channel it has visited before but switched away from.

The reason why we also limit ourselves to a pre-determined sequence of channels has to do with the multi-user scenario we aim to analyze.  If there is only a single user, then obviously the reasonable thing to do is to also optimize the sequence/order of channel sensing, together with optimizing the switching and transmission decisions.  Indeed there has been a large volume of study on determining optimal sensing orders, see e.g., \cite{4786510,5272267,5738213}.  However, when there are multiple competing users this type of optimization is no longer applicable: one's previously optimal sensing order may no longer be optimal depending on what order other users adopt.  Consequently this needs to be either treated as a centralized multi-user optimization problem, where the jointly optimal sensing orders are computed simultaneously for all users, or treated as a game-theoretic problem where each user selfishly determines its sensing order to maximize its own utility.  Either approach is very different from the study in the present paper.  The game-theoretic approach in particular is largely an open question as it involves the equilibrium analysis of complex decisions (not only the sensing order of channels but also the stopping decisions on any given channel). While this remains an interesting directly of future research, in the present study we adopt the assumption that a user simply follows a pre-defined (can be randomly chosen) sequence of channels and focus our attention on the switching decisions instead.  In Section \ref{simu} we compare the result between randomly selecting these sequences and greedily selecting these sequences where each user always sense channels in descending order of the average reward.

For the remainder of our presentation, we will use the terms {\em stages} and {\em steps} to describe the two time scales of decision making along the two dimensions described above.  Movement along the spectral dimension (i.e., switching from one channel to the next) occur in stages; stage $i$ means channel $i$ and this is indexed by the subscript in the value function $V_i(x)$.  The decision process within the same stage (or in the same channel) occur in steps; the decision to remain on the same channel or switch away occur at the boundary of a step.  The indexing of steps is not explicit in the expression given in (\ref{pf}) but will be made explicit in our subsequent analysis.

\section{Optimal access policy under the IID channel model}\label{iid}
In this section, we model the channels as fast changing, IID processes, where successive observations of the state of the same channel are independent. 

\subsection{An optimal ``nested'' stopping rule}
Since successive channel states are independent, the value function (\ref{pf}) is simplified:
\begin{align}\label{eqn:simple}
V_{i}(x) = \max \left\{x, 
\frac{T}{T+t^c_i}E \{V_i(X^i)\}, \frac{T}{T+t^s_{i+1}}E\{V_{i+1}(X^{i+1})\}\right\} ~.
\end{align}
The above three-way comparison suggests the following. If the current state $x$ is sufficiently high then the optimal decision is STOP.  The comparison between the second and the third terms is more interesting: both terms are independent of the state $x$, so if the second term is larger then it will always be larger.  As previously mentioned, this implies that if we ever decide to STAY, then we will never SWITCH later.  The reverse is also true: if we ever decide to SWITCH then we will never return to the same channel.  These observations can lead to a concrete proof of the existence and uniqueness of a threshold rule but in general cannot produce a closed  form for the computation of the threshold.  Below we will instead use results from optimal stopping theory \cite{uclastopping} to obtain not only the existence but also a closed form for the threshold. Consider the following substitution,
\begin{align}\label{eqn:bundle}
\hat{X}^{i}(x) = \max \left\{x, 
\frac{T}{T+t^s_{i+1}}E\{V_{i+1}(X^{i+1})\}\right\}
\end{align}
with the value function subsequently re-written as
\begin{align}\label{eqn:transform}
V_{i}(x) = \max \left\{\hat{X}^{i}(x) ,\frac{T}{T+t^c_i}E \{V_i(X^i)\}\right\} ~.
\end{align}

This substitution reduces the decision process to a two-way comparison, and more importantly, a one-dimensional decision process. Specifically, since the state $x$ is IID over the same channel/stage $i$, the first term $\hat{X}^{i}(x)$ as defined in (\ref{eqn:bundle}) is also IID over the same stage $i$ while encoding the information of other channels/stages.  Therefore, if we view $\hat{X}^{i}(x)$ as the reward of a (meta) stopping action and $t^c_i$ as the cost for continuing, then the value function given in (\ref{eqn:transform}) represents a standard stopping time rate-of-return problem with two possible actions in each step, (meta) stopping and continuation, respectively, and this process concerns only a single stage/channel. The switching to the next stage occurs when the (meta) stopping action is taken (which essentially ends the above one-dimensional stopping time problem), and it is determined that SWITCH is a better action than STOP. 

The following theorem characterizes the property of the optimal decision for the problem given in (\ref{eqn:simple}) or equivalently (\ref{eqn:transform}).
\begin{theorem}
The optimal action at stage $i$ of deciding between \{STOP, SWITCH\} and STAY is given by a stopping rule: the state space of the channel condition can be divided into a stopping set $\Delta_i^s$ and continuation set $\Delta_i^c$, such that whenever the channel condition is observed to be in either set, the corresponding action (STOP/SWITCH vs. STAY) is taken\footnote{The word ``continuation'' in this context refers to continuing on the same channel, whereas ``stopping'' (or the term (meta) stopping used earlier) refers to no longer staying on the same channel either by a transmission or by switching away.}.
Furthermore, these two sets are given by the following threshold property: 
\begin{align}
\Delta^s_i =  \{x: \hat{X}^{i}(x)  \geq \lambda^*\}, \forall i~,
\end{align}
where the threshold $\lambda^*$ at the $i$th stage is given by the unique solution to
\begin{align}
E[\hat{X}^{i}(x)  - \lambda]^+ = \frac{\lambda \cdot t^c_i}{T}~. \label{sol}
\end{align}
\end{theorem}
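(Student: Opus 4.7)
The plan is to exploit the substitution already carried out in (\ref{eqn:bundle})--(\ref{eqn:transform}) to reduce the problem to a classical rate-of-return (maximal-average-reward) optimal stopping problem on a single channel, then read off the threshold structure and its fixed-point equation from standard optimal stopping theory, as in \cite{uclastopping}. Specifically, once we view $\hat X^{i}(X^{i})$ as the i.i.d.\ reward sequence and $t^c_i$ as the per-step continuation cost (so that each ``continue'' corresponds to another round of contention on channel $i$ taking expected time $t^c_i$ with throughput normalizer $T/(T+t^c_i)$), the value function in (\ref{eqn:transform}) is exactly the Bellman equation of a one-dimensional stopping problem, for which the classical rate-of-return theorem guarantees that a threshold rule is optimal with the threshold equal to the optimal rate itself.

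First I would fix $i$ and denote $a=T/(T+t^c_i)$, $v = E[V_i(X^i)]$. Taking expectations on both sides of (\ref{eqn:transform}) and using independence of $\hat X^{i}(X^i)$ across steps gives
\begin{equation}
v \;=\; E\!\left[\max\bigl\{\hat X^{i}(X^i),\, a v\bigr\}\right]
\;=\; a v \;+\; E\!\left[(\hat X^{i}(X^i)-a v)^+\right].
\end{equation}
Setting $\lambda^* := a v$ and rearranging yields $\lambda^*(1/a-1) = E[(\hat X^{i}(X^i)-\lambda^*)^+]$, and since $1/a-1 = t^c_i/T$ this is exactly (\ref{sol}). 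The threshold structure then drops out of (\ref{eqn:transform}): STAY is optimal iff $\hat X^{i}(x) < a v = \lambda^*$, and STOP/SWITCH (whichever of the two realizes $\hat X^i(x)$) is optimal iff $\hat X^{i}(x)\ge \lambda^*$, which is precisely the definition of $\Delta_i^s$.

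Existence and uniqueness of $\lambda^*$ I would establish by defining $g(\lambda) := E[(\hat X^{i}(X^i)-\lambda)^+] - \lambda t^c_i/T$ and checking: (i) $g$ is continuous in $\lambda$; (ii) $g(0) = E[\hat X^{i}(X^i)] > 0$ because the channel rewards are strictly positive; (iii) $g(\bar M)\le 0$ for $\bar M$ an upper bound of $\hat X^{i}(\cdot)$ (finite since $X^j$ has finite support and $\hat X^{i}$ is the max of $x$ and a finite constant); (iv) $g$ is strictly decreasing on $[0,\bar M]$ because both $E[(\hat X^{i}(X^i)-\lambda)^+]$ is nonincreasing and $\lambda t^c_i/T$ is strictly increasing. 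By the intermediate value theorem and strict monotonicity, $g$ has a unique root in $[0,\bar M]$, giving the unique $\lambda^*$.

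The main obstacle I expect is a technical one: justifying that the classical rate-of-return stopping theorem applies to the reduced problem, in particular that the optimal value $v$ is finite and attained, and that the sup over stopping times is indeed achieved by the threshold rule (and not merely by limits of truncated rules). This reduces to verifying the standard regularity hypotheses of Chow--Robbins type theorems, which are immediate here because $\hat X^{i}(X^i)$ is bounded (finite support) and the per-step cost $t^c_i/T$ is a strictly positive constant. Once these hypotheses are in place, the optimality of the threshold rule and the fixed-point characterization in (\ref{sol}) follow directly; the rest of the argument is just the algebraic manipulation sketched above.
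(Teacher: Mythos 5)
Your proposal is correct and follows essentially the same route as the paper: reduce to a one-dimensional rate-of-return stopping problem via the substitution $\hat{X}^i(x)$, invoke classical optimal stopping theory (boundedness of $\hat{X}^i$ giving existence and the threshold form), and obtain uniqueness from the strict monotonicity of $g(\lambda)=E[(\hat{X}^i(X^i)-\lambda)^+]-\lambda t^c_i/T$, which is exactly the paper's function $\mathcal{D}(\lambda)$. The only (cosmetic) difference is that you derive the fixed-point equation by taking expectations directly in the Bellman equation and setting $\lambda^*=av$, whereas the paper routes through the $\lambda$-parametrized reward $Z_i^k(\lambda,x)$ and the condition $V^*(\lambda^*)=0$; both yield (\ref{sol}).
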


\begin{proof}
We first prove the existence of an optimal stopping rule. Define the reward function associated with step $k$ of the stopping decision process at stage $i$ as
\begin{align}
{Z_i^k(\lambda, x)} = 
\hat{X}^i(x)  T - \lambda(k \cdot t^c_i+T),
\end{align}
where $\lambda$ is a positive finite valued variable. From [Theorem 1, Chapter 3, \cite{uclastopping}] we know that an optimal stopping rule exists if the following two conditions are satisfied\footnote{The interpretation of these two conditions is that even with knowing the future the maximum expected reward, or the reward approaching the supremum, is finite.}:
\begin{align}
\text{(C1)} ~~ E\{\sup_k {Z_i^k(\lambda, X^i)}\} < \infty, ~~ \text{(C2)} ~~\lim_{k \rightarrow \infty} {Z_i^k(\lambda, X^i)} \leq {Z_i^{\infty}(\lambda, X^i)}, a.s.
\end{align}
Since we have a finite number of channels and the channel state realization is finite, $\hat{X}^i(x)$ is  finite.  Therefore ${Z_i^{\infty}(\lambda, X^i)} = -\infty$. 
Since $\hat{X}^i(x) $, $\frac{T}{T+t^s_{i+1}}E\{V_{i+1}(X^{i+1})\}$ and $T$ are all finite, (C2) is easily satisfied.  Next define
\begin{eqnarray}
{Z_i(\lambda, x)} = 
\hat{X}^i(x)  T - \lambda T,
\end{eqnarray}
which is again finite.
Therefore we have
$E\{{Z_i(\lambda, X^i)}\} < \infty$,  and $E\{({Z_i(\lambda, X^i)})^2\} < \infty$.
Also noting that ${Z_i(\lambda, X^i)}$ is IID since $X^i$ is IID, by the dominated convergence theorem we have $E\{\sup_k {Z_i^k(\lambda, X^i)}\} < \infty$, verifying (C1). 
The existence is thus established. 

Next we prove that the optimal stopping rule is given by a threshold. Using the principle of optimality [Chapter 2, \cite{uclastopping}] and the results from [Section 4.1,  \cite{uclastopping}] (we refer the reader to [Example 6.2,  \cite{uclastopping}] for further detail), our problem as expressed in (\ref{eqn:transform}) is equivalent to a rate-of-return problem with a reward of stopping given by 
{$Z_i(\lambda, x)$} and a cost of continuation given by 
$\lambda t^c_i$.  The optimal stopping rule at step $k$ 
is given by
\begin{align}
\Delta^s_i =  \{x: {Z_i(\lambda^*, x)} \geq 0\} = \{x: \hat{X}^i(x) \geq \lambda^{*}\}, 
\end{align}
where $\lambda^{*}$ is such that the function $V_k^{*}(\lambda)$, defined recursively as ([Chapter 6,  \cite{uclastopping}])
{$V^*_k(\lambda) = E\{\max \{Z_i(\lambda, x)-\lambda t^c_i,V^*_{k}(\lambda)-\lambda t^c_i\} \}$},  
is evaluated to be zero, i.e., $V_k^{*}(\lambda^{*}) = 0$. To obtain $\lambda^{*}$, we take $V_k^{*}(\lambda^{*}) = 0$ into the above definition and get
$E\left\{\max\{{Z_i(\lambda, x)}, 0\}\right\} = \lambda^{*} t^c_i$,
or equivalently, $\lambda^{*}$ is such that it satisfies
\begin{align}  \label{eqn:threshold_eqn}
E[\hat{X}^i(x)T - \lambda^{*}T]^{+}= \lambda^{*}t^c_i,
\end{align}
%
%
which is the same as (\ref{sol}).  This completes the proof of the form of the threshold.
%
It remains to show that a unique solution exists to (\ref{sol}). Denote by
$\mathcal D(\lambda) = E[\hat{X}^i(x) - \lambda]^{+}- \frac{\lambda t^c_i}{T}$.
It is not hard to verify that $\mathcal D(\lambda)$ is a continuous and strictly decreasing function of $\lambda$.  Furthermore, we have $\mathcal D(\lambda = 0) = E[\hat{X}^i(x)]^{+} > 0$ since all channel states are positive, and $\mathcal D(\lambda) \rightarrow -\infty$ as $\lambda \rightarrow \infty$.
Therefore there is a unique solution to $\mathcal D(\lambda) = 0$, i.e., the threshold exists and is unique, completing the proof.
\end{proof}

In practice, to calculate this threshold, we define
\begin{eqnarray}
c_i = \frac{T}{T+t^s_{i+1}}E\{V_{i+1}(X^{i+1})\}.
\end{eqnarray}
Re-writing (\ref{sol}) in the original random variables, we have 
\begin{eqnarray}
&& E\left[\max \left\{X^i,\frac{T}{T+t^s_{i+1}}E\{V_{i+1}(X^{i+1})\} \right\} T - \lambda T\right]^+ \nonumber \\
&=& E\left\{\max \{X^i T-\lambda T,0\}|X^i > c_i\right\}\cdot P(X^i > c_i) +E\left\{\max \{c_iT-\lambda T,0\}|X^i \leq c_i \right\} \cdot P(X^i \leq c_i) \nonumber \\
&= & \lambda t^c_i~.
\end{eqnarray}
If the solution $\lambda^{*}<c_i$, then it has to satisfy
$\int_{c_i}^{\bar{X}^i} (x-\lambda)f_{X^i}(x)dx + (c_i-\lambda)\cdot P(X^i \leq c_i) = \lambda {t}^c_i /T$, and thus can be obtained by
$\lambda^{*} = \frac{\int_{c_i}^{\bar{X}^i}xf_{X^i}(x)dx+c_i\cdot P(X^i \leq c_i)}{1+{t}^c_i/T}$ and verifying that the resulting $\lambda^{*} < c_i$.
If the solution $\lambda^{*}\geq c_i$, then
%
it must satisfy
$\int_{\lambda}^{\bar{X}^i} (x-\lambda)f_{X^i}(x)dx = \lambda {t}^c_i/T$, and the solution may be obtained using  
$\lambda^{*} = \frac{\int_{\lambda^{*}}^{\bar{X}^i} x f_{X^i}(x)dx}{P(X^i \geq \lambda^{*}) + {t}^c_i/T}$ and verifying\footnote{This function is a fixed point equation which could be solved by iterative methods as in \cite{Zheng:2009:DOS:1669334.1669354}.} that the resulting $\lambda^{*} \geq c_i$. 



\begin{remark}
The quantity $c_i$ defined above is the expected reward of SWITCH, while $\lambda^*$ is the threshold for making a decision between the set \{STOP, SWITCH\} and STAY.  The optimal policy given in the above  theorem is illustrated in Figure \ref{ill_policy}, which can be viewed as a sequence of two YES/NO questions used in decision making involving two thresholds.  (1) If $\lambda^*< c_i$, then the optimal decision is either STOP or SWITCH depending on whether $x> c_i$.  If the current condition is very good ($x>c_i$) then the decision is STOP; otherwise SWITCH.  In this case the reward from switching is sufficiently good that we will never consider STAY.  (2) If $\lambda^* > c_i$, then the optimal decision is either STOP or STAY depending on whether $x>\lambda^*$. In this case the reward from switching is inferior so that SWITCH is not an option. This policy will be referred to as a {\em nested stopping} policy.
\begin{figure}[h!]
    \centering
        \includegraphics[width=0.4\textwidth,height=0.2\textwidth]{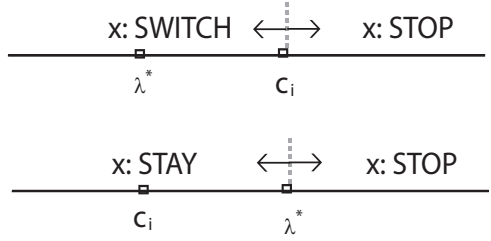}
    \caption{Illustration of the decision process}\label{ill_policy}
\end{figure}

\end{remark}

\subsection{Properties of the nested stopping policy}

We next investigate a number of properties of the multiuser multichannel system as a result of the above nested stopping policy.  We start by examining its effect on the traffic load vector $\mathbf{G}$. 
Unless otherwise noted, all proofs can be found in the appendix. 

\begin{lemma}[Monotonicity of the value function]
Consider two traffic load vectors $\mathbf{G}$ and $\mathbf{G^{'}}$ where $G_{i} \leq G^{'}_{i},\forall i \in \Omega$.  Denote the corresponding sets of value functions by $V$ and $V^{'}$, respectively.  Then we have
$E\{V_{i}\} \geq E\{V^{'}_{i}\}, \forall i \in \Omega$.  
\label{lemma2}
\end{lemma}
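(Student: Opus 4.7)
My plan is to prove the lemma by backward induction on the stage index $i$, starting from $i=N$ and proceeding down to $i=1$. The main tools will be Proposition~\ref{assump-t1}, which states that the delays $t^c_j$ and $t^s_j$ are nondecreasing in the load $G_j$, together with a fixed-point monotonicity argument that controls how the self-referential value function $V_i$ responds to an increase in load.

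For the base case $i=N$, the value function reduces to $V_N(x)=\max\{x,\lambda_N\}$, where the constant $\lambda_N=\frac{T}{T+t^c_N}E\{V_N(X^N)\}$ is the unique root of the strictly decreasing map
\begin{equation*}
h(\lambda)=T\cdot E\{\max(X^N,\lambda)\}-\lambda(T+t^c_N).
\end{equation*}
Writing $\tilde h$ for the analogous map built from the primed load $\mathbf{G}'$, one has $h(\lambda)-\tilde h(\lambda)=\lambda(t^{c\prime}_N-t^c_N)\geq 0$ for $\lambda\geq 0$. Evaluating at the root $\lambda_N'$ of $\tilde h$ then yields $h(\lambda_N')\geq 0=h(\lambda_N)$, and the strict monotonicity of $h$ forces $\lambda_N'\leq \lambda_N$. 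Hence $V_N(x)\geq V_N'(x)$ pointwise, so $E\{V_N\}\geq E\{V_N'\}$.

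For the inductive step, I would assume $E\{V_{i+1}\}\geq E\{V_{i+1}'\}$ and employ the substitution $\hat{X}^{i}(x)=\max\{x,B_i\}$ with $B_i=\frac{T}{T+t^s_{i+1}}E\{V_{i+1}(X^{i+1})\}$, so that $V_i(x)=\max\{\hat{X}^{i}(x),A_i\}$ and $A_i$ is the unique positive solution of
\begin{equation*}
A_i(T+t^c_i)=T\cdot E\{\max(\hat{X}^{i}(X^i),A_i)\},
\end{equation*}
whose existence and uniqueness were already established in the preceding theorem. The induction hypothesis combined with $t^s_{i+1}\leq t^{s\prime}_{i+1}$ yields $B_i\geq B_i'$, and hence $\hat{X}^{i}(x)\geq \hat{X}^{i\prime}(x)$ pointwise. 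Replaying the monotone-root-shift argument from the base case---now for the family of maps $\lambda\mapsto T\cdot E\{\max(\hat{X}^{i}(X^i),\lambda)\}-\lambda(T+t^c_i)$, which dominate their primed counterparts pointwise for $\lambda\geq 0$ because $\hat{X}^{i}\geq \hat{X}^{i\prime}$ and $t^c_i\leq t^{c\prime}_i$---gives $A_i\geq A_i'$. Taking pointwise maxima then produces $V_i(x)\geq V_i'(x)$ and the desired inequality in expectation.

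The main obstacle I anticipate is the self-referential character of $A_i$ and $\lambda_N$: a naive pointwise comparison of $V_i$ with $V_i'$ is unavailable because raising the load simultaneously shrinks the prefactor $T/(T+t^c_i)$ and alters $V_i$ inside the expectation on the right-hand side. The proof hinges on the observation that lifting the delay strictly lowers the parameterized fixed-point function at every nonnegative argument, which displaces the unique zero to the left; once this monotone-root-shift is in hand, the remaining work is routine bookkeeping with the order-preserving properties of the $\max$ operator and the monotonicity supplied by Proposition~\ref{assump-t1}.
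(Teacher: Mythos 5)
Your proof is correct and follows essentially the same route as the paper's: backward induction over the stages, with Proposition~\ref{assump-t1} driving a downward shift of the continuation-value/threshold fixed point when the load grows, and the induction hypothesis (through the SWITCH constant $B_i$, the paper's $c_i$) propagating the comparison to stage $i$. Your version is in fact slightly tidier, since establishing the pointwise domination $V_i(x)\geq V_i'(x)$ via the monotone-root-shift of the fixed-point map avoids the paper's case split between $\lambda^*_n<c_n$ and $\lambda^*_n\geq c_n$ and its derivative computations.
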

This lemma conveys the intuition that when the load increases, competition increases leading to longer delays.
Thus the expected throughput decreases in general.  
We next establish the stability of the system under the nested stopping policy, starting with an assumption. 

\begin{assumption} 
No channel is dominant, i.e., there is no single channel that will attract all arrivals under the nested stopping policy. \label{assupsta}
\end{assumption}
This assumption excludes the extreme case where a single channel is of far better quality (e.g., very high data rate) that even considering the cost in competition it is beneficial to always switch to this channel, no matter the  conditions observed in the other channels. 

\begin{lemma}[Ergodicity of the arrival process]
The arrival processes are ergodic under the nested stopping policy and Assumption \ref{assupsta}. \label{staiid}
\end{lemma}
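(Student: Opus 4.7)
The plan is to show ergodicity by establishing that the per-channel arrival rates $G_i$ converge to a well-defined equilibrium $G_i^*$ under the iteration induced by the nested stopping policy, and that the empirical arrival processes around these rates satisfy the strong law of large numbers.

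First I would make the per-packet dynamics explicit. Upon gaining the right to transmit at channel $i$, a packet's action has well-defined probabilities $p_i^{\mathrm{stop}}, p_i^{\mathrm{stay}}, p_i^{\mathrm{switch}}$ that are continuous functions of the threshold $\lambda_i^*$ and the switching value $c_i$, hence via~(\ref{sol}) continuous functions of the parameters $(t_j^c, t_j^s)_{j\ge i}$. The STAY loop on channel $i$ is i.i.d.\ under the IID channel model, so the number of STAYs before a STOP or SWITCH is geometric with finite mean, and a packet's trajectory along the spectral dimension is a Markov chain on the finite state space $\{1,\ldots,N,\mathrm{DONE}\}$ whose channel index strictly increases until absorption in DONE.

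Next I would use Assumption~\ref{assupsta} to guarantee that $p_i^{\mathrm{stop}}$ is bounded below along every realizable path, so every packet leaves the system in finite expected time. Since the external arrivals are Poisson as specified earlier, the arrival stream on channel $j$ decomposes into external Poisson arrivals, SWITCHed-in packets from channel $j-1$, and retransmission attempts from STAYs on $j$. Each of these components is a filtering of a Poisson process through a finite-mean terminating Markov chain, hence an ergodic point process; the superposition preserves ergodicity, and the strong law of large numbers yields $\hat{G}_j(t)\to G_j^*$ almost surely for some deterministic $G_j^*$.

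The principal obstacle is the self-referential nature of the policy: the transition probabilities depend on $t_i^c, t_i^s$, which depend on $\mathbf{G}$, which depends back on the policy through $\lambda_i^*$ and $c_i$. I would close this loop by a Brouwer fixed-point argument applied to the map $\Phi:\mathbf{G}\mapsto\mathbf{G}'$, where $\mathbf{G}'$ is the rate vector induced by running the nested policy computed under $\mathbf{G}$. Continuity of $\Phi$ follows from Proposition~\ref{assump-t1} together with the continuous-monotone structure of~(\ref{sol}); $\Phi$ maps a compact convex set of feasible rates into itself because Assumption~\ref{assupsta} precludes concentration on any single channel. Lemma~\ref{lemma2} then supplies the monotonicity needed to argue that the natural iteration actually converges to the fixed point $\mathbf{G}^*$, which is the ergodicity claim.
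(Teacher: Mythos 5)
Your proposal has a genuine gap, and it also misses the idea the paper actually uses. The core problem is that you substitute a deterministic fixed-point statement for a stochastic ergodicity statement. A Brouwer fixed point of the map $\Phi:\mathbf{G}\mapsto\mathbf{G}'$ only establishes that a self-consistent rate vector $\mathbf{G}^*$ \emph{exists}; it says nothing about whether the actual load process, with its feedback loop (loads determine $t_i^c,t_i^s$, which determine $\lambda_i^*$ and $c_i$, which determine where traffic switches, which determines the loads), settles into that equilibrium rather than oscillating or drifting. Your attempt to close this gap by invoking Lemma \ref{lemma2} does not work: that lemma gives monotonicity of the value functions $E\{V_i\}$ in $\mathbf{G}$, which is not monotonicity (let alone contractivity) of the load map $\Phi$ itself, and monotone maps on $\R^N$ can still have non-convergent iterations. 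Likewise, your SLLN step is applied to the arrival stream \emph{conditioned on fixed parameters}, which assumes away exactly the feedback that makes the claim nontrivial. A secondary soft spot is the claim that $\Phi$ maps a compact convex set into itself ``because Assumption \ref{assupsta} precludes concentration on any single channel''; the assumption as stated rules out a dominant channel, but turning that into an invariant compact set for $\Phi$ requires an argument you have not supplied.

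The paper's route is a drift argument, which is the missing idea. Assumption \ref{assupsta} is used to produce, for each channel $i$, a finite threshold $\tilde{G}_i$ such that once $G_i\geq\tilde{G}_i$ the value of channel $i$ falls below the average value of the remaining channels, so under the nested stopping policy no traffic switches \emph{into} channel $i$ and the load $G_i$ cannot drift higher. Taking any increasing unbounded Lyapunov function $L$ on $[0,G]$, this yields the drift inequality $E[L(\tilde{G}_i)\mid G_i]\leq L(G_i)$, and the Foster--Lyapunov criterion then gives ergodicity of the load process directly, with no need to establish existence or attractivity of a fixed point. If you want to salvage your approach, you would need to prove convergence of the stochastic dynamics to $\mathbf{G}^*$ (e.g., via a stochastic-approximation or contraction argument), which is substantially harder than the one-line drift bound the assumption was designed to deliver.
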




\begin{lemma}[Load balance]
{We have $\frac{\partial G_i}{\partial G} \geq 0,\forall i \in \Omega$ under the nested stopping policy}.\label{iidlg}
\end{lemma}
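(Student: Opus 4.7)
My plan is to derive the result from the monotone response of the nested stopping policy to an increase in congestion. In steady state each $G_i$ satisfies a flow-balance equation with three contributing terms: the external arrival rate to channel $i$, the rate of users who SWITCH into channel $i$ from channel $i-1$, and the retry rate from users on channel $i$ who chose STAY. All three terms are determined by the policy thresholds $c_i = \frac{T}{T+t^s_{i+1}} E\{V_{i+1}(X^{i+1})\}$ and $\lambda^*_i$, which are themselves functions of the full load vector $\mathbf{G}$.

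The first step I would take is to establish monotonicity of the thresholds in $G$. By Lemma \ref{lemma2}, a component-wise increase in $\mathbf{G}$ weakly decreases each $E\{V_j\}$, hence $c_i$ is non-increasing in $G$. Combined with Proposition \ref{assump-t1}, which states that $t^c_i$ is non-decreasing in $G_i$, an inspection of the threshold equation (\ref{sol}) shows that $\lambda^*_i$ is also non-increasing: its left-hand side shifts down at every $\lambda$ (because $\hat{X}^i(x) = \max(x, c_i)$ weakly decreases with $c_i$), while its right-hand side shifts up (because $t^c_i$ weakly increases), so the unique crossing $\lambda^*_i$ must move leftward. Reading off the nested stopping policy then gives that the probability of STOP at channel $i$ is non-decreasing in $G$, while the probabilities of STAY and SWITCH are non-increasing.

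Next I would run a coupling argument between two operating points $G < G'$ driven by a scaled-up external arrival process and a common sequence of channel-state realizations. The threshold monotonicity above gives pathwise domination of stopping decisions: every user who STOPs in the $G$-system also STOPs (no later) in the $G'$-system, and every user who STAYs or SWITCHes in the $G'$-system does the same in the $G$-system. Propagating this domination through the flow-balance equations along the channel sequence $1, 2, \ldots, N$, combined with the larger external inflow in the $G'$-system, forces $G'_i \ge G_i$ for every $i \in \Omega$, which is precisely $\partial G_i/\partial G \ge 0$. Assumption \ref{assupsta} enters to rule out the degenerate case in which a single dominant channel would absorb all incremental load while leaving the other $G_j$'s unchanged.

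The main obstacle is the circular dependence between the policy and $\mathbf{G}$: the thresholds each user uses are themselves functions of the equilibrium $\mathbf{G}$. To handle this I would invoke Lemma \ref{staiid} to guarantee a well-defined stationary $\mathbf{G}$ at each operating point, then view the map from $\mathbf{G}$ to the flow-balance-induced $\mathbf{G}$ as a monotone operator on $[0,\infty)^N$; a Tarski-type fixed-point argument gives that the stationary $\mathbf{G}$ moves upward in the coordinate-wise order as $G$ grows, from which the stated derivative inequality follows.
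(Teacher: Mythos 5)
There is a genuine gap, and it sits exactly where you placed your main technical bet: the Tarski-type fixed-point step. You correctly identify the circularity (the thresholds depend on the equilibrium $\mathbf{G}$, and your threshold-monotonicity and coupling steps both presuppose a component-wise ordering $\mathbf{G}' \geq \mathbf{G}$, which is the very conclusion of the lemma), but the proposed fix does not work as stated: the map from a load vector to the flow-balance-induced load vector is not a monotone operator in the coordinate-wise order. Raising $G_j$ lowers $E\{V_j\}$ and raises $t^s_j$, hence lowers $c_{j-1}$, which makes SWITCH into channel $j$ \emph{less} likely; raising the load also lowers the effective stopping threshold $\max(\lambda^*_i, c_i)$, so users stop earlier and generate \emph{fewer} STAY retries and SWITCH-ins. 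This is negative feedback, so the operator is, if anything, antitone in the internal-traffic components, and Tarski's theorem does not deliver the upward movement of the fixed point. The same directionality problem undermines the coupling step: pathwise earlier stopping in the $G'$-system reduces the internally generated traffic downstream, so propagating the domination through the flow-balance equations does not "force" $G'_i \geq G_i$; the claim that the STAY and SWITCH probabilities are individually non-increasing is also unsupported, since only their sum is controlled and the ordering of $\lambda^*_i$ versus $c_i$ can flip as both decrease.

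For contrast, the paper sidesteps this circularity with a short contradiction-plus-induction argument on the number of channels $N$: assuming (toward contradiction) that $G_1' < G_1$ while the total load increased, the induction hypothesis applied to the remaining channels yields $G_i' > G_i$ for $i \neq 1$, so Lemma \ref{lemma2} applies to that partial ordering and gives $E\{V_i'\} < E\{V_i\}$; then $c_1$ is non-increasing, so the flow switching out of channel 1 cannot increase while the flow switching into it cannot decrease, contradicting $G_1' < G_1$. The component-wise comparison needed to invoke the value-function monotonicity is thus supplied by the contradiction hypothesis together with the induction hypothesis, not by a fixed-point argument. If you want to salvage your route, you would need to actually prove a monotonicity property of the equilibrium map (or restructure the argument so the needed ordering is hypothesized and then refuted), rather than assert it; your first paragraph's threshold analysis and your use of Proposition \ref{assump-t1} and Lemma \ref{lemma2} are sound ingredients, but they are the same ingredients the paper uses inside a logically different, and tighter, scaffold.
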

In other words, if the total traffic load increases, the input/load to each channel is non-decreasing.  This property combined with the monotonicity (Lemma \ref{lemma2}) leads to the following stronger monotonicity result on the value function; the proof is trivial and thus omitted.
\begin{lemma}[Strong monotonicity] $E\{V_{i}\},i \in \Omega$, are all non-increasing functions of $G$.
\end{lemma}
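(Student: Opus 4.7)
The plan is to combine Lemmas \ref{lemma2} and \ref{iidlg} directly, treating the total load $G$ as a scalar parameter that drives the per-channel load vector $\mathbf{G}=(G_1,\ldots,G_N)$ through the steady-state ergodic arrival rates guaranteed by Lemma \ref{staiid}. First I would fix two scalar load levels $G < G'$ and consider the induced per-channel load vectors $\mathbf{G}(G)$ and $\mathbf{G}(G')$ under the nested stopping policy. By Lemma \ref{iidlg}, $\partial G_i / \partial G \geq 0$ for every $i \in \Omega$, so each coordinate $G_i$ is non-decreasing along the scalar path from $G$ to $G'$; integrating (or, if one prefers to avoid differentiability, just observing monotonicity of the induced map) yields $G_i(G) \leq G_i(G')$ componentwise.

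Next I would invoke Lemma \ref{lemma2}, which asserts that whenever one load vector dominates another componentwise, the associated value functions satisfy the reverse inequality in expectation: $E\{V_i\} \geq E\{V'_i\}$ for every $i \in \Omega$. Applying this to the pair $\big(\mathbf{G}(G), \mathbf{G}(G')\big)$ produced in the first step gives $E\{V_i\}\big|_G \geq E\{V_i\}\big|_{G'}$, which is exactly the claim that $E\{V_i\}$ is a non-increasing function of $G$. No re-examination of the threshold equation (\ref{sol}) or the contention-delay formulas is needed, since both Propositions \ref{assump-t1} and the monotonicity structure have already been absorbed into Lemma \ref{lemma2}.

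The only subtle step is the passage from the infinitesimal statement in Lemma \ref{iidlg} to the finite comparison required by Lemma \ref{lemma2}. I expect this to be the main (and essentially only) obstacle, and it is a purely bookkeeping one: either show that $G \mapsto \mathbf{G}(G)$ is sufficiently regular to integrate the partial-derivative inequality, or restate Lemma \ref{iidlg} in its natural monotone form and invoke it directly. Given the ergodicity established in Lemma \ref{staiid}, the stationary coordinate rates $G_i(G)$ are well-defined, and monotonicity in $G$ is inherited from the sign of $\partial G_i/\partial G$. Once this is granted, the chain ``$G \uparrow \Rightarrow G_i \uparrow \text{ for each } i \Rightarrow E\{V_i\} \downarrow$'' closes the argument, which is why the authors mark the proof as trivial.
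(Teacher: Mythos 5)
Your proposal is correct and follows exactly the route the paper intends: the authors state that combining the load-balance property (Lemma \ref{iidlg}) with the componentwise monotonicity of the value functions in $\mathbf{G}$ (Lemma \ref{lemma2}) yields the result, and omit the details as trivial. Your additional remark about integrating the sign condition $\partial G_i/\partial G \geq 0$ to obtain the finite componentwise comparison is a reasonable bit of bookkeeping that the paper glosses over, but it does not constitute a different approach.
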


We next analyze the impact of transmission time $T$ and address the question whether by reserving more time for a single transmission users gain in average throughput.

\begin{lemma}[Impact of $T$]
$E\{V_i\},i \in \Omega$, are all non-decreasing functions of $T$. \label{iidt}
\end{lemma}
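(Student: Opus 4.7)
The plan is to prove the lemma by backward induction on the stage index $i$, from $i=N$ down to $i=1$, treating the load vector $\mathbf{G}$ as a fixed parameter so that $t^c_i$ and $t^s_i$ are viewed as explicit functions of $T$ through the formulas in Section \ref{model}. As a preliminary step I will verify that both discount factors $\frac{T}{T+t^c_i(T)}$ and $\frac{T}{T+t^s_i(T)}$ are non-decreasing in $T$. The first is immediate since $t^c_i$ does not depend on $T$. The second is equivalent to $t^w_i(T)/T$ being non-increasing in $T$; using the closed-form expression for $t^w_i$ together with the identity $1/S_i = T+1+1/(G_i e^{-2G_i})$, this should reduce to a direct calculus check.

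For the base case $i=N$, the nested stopping characterization simplifies because there is no SWITCH option, so $\hat{X}^N(x)\equiv x$ and the preceding theorem yields $V_N^T(x) = \max\{x,\lambda_N^*(T)\}$ with $\lambda_N^*(T)$ the unique solution of $E[X^N-\lambda]^+ = \lambda\, t^c_N/T$ (cf.\ (\ref{sol})). As $T$ grows, the right-hand side of this fixed-point equation shifts downward pointwise in $\lambda$ while the left-hand side is unchanged; since the LHS is strictly decreasing and the RHS strictly increasing in $\lambda$, the intersection $\lambda_N^*(T)$ must rise with $T$, and hence $E\{V_N^T\} = E[\max\{X^N,\lambda_N^*(T)\}]$ is non-decreasing in $T$. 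For the inductive step, assume $E\{V_{i+1}^T\}$ is non-decreasing in $T$. Then the switching-reward $c_i(T) := \frac{T}{T+t^s_{i+1}(T)} E\{V_{i+1}^T\}$ is a product of two non-decreasing factors, so $\hat{X}^i(x,T) = \max\{x,c_i(T)\}$ is pointwise non-decreasing in $T$. Repeating the fixed-point argument of (\ref{sol}) with $\hat{X}^i(X^i,T)$ in place of $X^i$ and $t^c_i$ in place of $t^c_N$ shows that $\lambda_i^*(T)$ is non-decreasing in $T$, and therefore
\[
E\{V_i^T\} \;=\; E\bigl[\max\{\hat{X}^i(X^i,T),\lambda_i^*(T)\}\bigr]
\]
is too, closing the induction.

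The main obstacle I anticipate is the preliminary monotonicity claim for $\frac{T}{T+t^s_i(T)}$: Proposition \ref{assump-t2} alone, which only asserts that $t^s_i$ is non-decreasing in $T$, is insufficient, and one needs the strictly stronger sublinear-growth property that $t^w_i(T)/T$ is non-increasing. This should be extractable from the explicit expression for $t^w_i$ but requires a careful derivative computation, using that the effective integration horizon in $t^w_i$ is capped at $T+1$. Once this auxiliary estimate is in hand, the rest is a clean fixed-point/backward-induction argument driven entirely by the closed form of the threshold in (\ref{sol}) and the elementary fact that pointwise monotonicity in $T$ propagates through maxima and expectations.
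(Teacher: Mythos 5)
Your proposal follows essentially the same route as the paper's own proof of Lemma \ref{iidt}: backward induction over the stages, with monotonicity of $E\{V_i\}$ in $T$ deduced from monotonicity of the threshold via the fixed-point characterization (\ref{sol}). Your base case is the paper's base case verbatim. Your inductive step is in fact somewhat cleaner: by observing that $\hat X^i(x,T)=\max\{x,c_i(T)\}$ is pointwise non-decreasing in $T$ and rerunning the crossing argument on (\ref{sol}), you treat both regimes at once, whereas the paper's appendix splits into the cases $\lambda^*_n\ge c_n$ and $\lambda^*_n<c_n$ and differentiates $E\{V_n\}$ explicitly in the second case.

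The one place your write-up stops short is the auxiliary claim that $T/(T+t^s_{i+1}(T))$ is non-decreasing in $T$, equivalently that $t^s_{i+1}(T)/T$ is non-increasing; you correctly note that Proposition \ref{assump-t2} alone cannot deliver this and that a derivative computation on $t^w_i$ (using $1/S_i=T+1+1/(G_i e^{-2G_i})$) is required, but you do not carry it out. To be fair, the paper does not carry it out either: its appendix writes ``with basic algebra (we will omit here)'' and invokes $\partial t^s_{n+1}/\partial T\ge 0$, which by itself points the wrong way --- what is actually needed for $\partial c_n/\partial T\ge 0$ is $t^s_{n+1}-T\,\partial t^s_{n+1}/\partial T\ge 0$ together with the induction hypothesis. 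So your diagnosis of the missing ingredient is accurate and, if anything, sharper than the paper's own wording; to make your argument complete you should finish that one calculus check, after which the induction closes exactly as you describe.
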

This result reflects the intuition that once a user finds a good transmission condition, it is beneficial for it to be able to use it for a longer period of time. However, practically $T$ cannot be made too large due to the channel coherence time: the channel condition will likely change over a large period $T$. 
\section{Optimal access policy under the Markovian channel model}\label{sec:markov}

This section presents a parallel effort to the previous section, under the assumption that the channel conditions evolve over time as a Markov chain.

\subsection{Uniqueness of the optimal strategy}

Denote the state space of channel $i$ by $S_i$, and the single-step (over one unit of time) state transition probability by $\mathcal P_i(y|x)$, $x, y\in S_i$.  The $k$-step transition probability is denoted by $\mathcal P^k_i(y|x)$.  The value function representing the maximum average throughput given the current condition at stage $i$ is given by the following. 
\begin{align}
V_{i} (x) = \max \left\{\hat X^i(x), \frac{T}{t^c_i+T}\cdot\sum_{y \in S_i} \mathcal P^{t^c_i}_i(y|x)\cdot V_{i}(y) \right\}\label{m1}
\end{align}
where $\hat X^i(x)$ follows the same definition as in the IID case. We make the following approximation. When $1/T$ is sufficiently small\footnote{This is possible since $T$ is an integer multiple of an arbitrary time unit, which can be made very small.  The only restriction is that we have taken a single time unit to be the time it takes to transmit a  control packet, so this assumption simply implies that a data transmission is much longer than a control transmission, which is typically true.}, we have
$\frac{T}{t^c_i+T} = \frac{1}{1+t^c_i/T} \approx (\frac{1}{1+1/T})^{t^c_i}$.
Denote $\beta = \frac{1}{1+1/T}$ and we arrive at the following an approximated value function
\begin{align}
V_{i} (x) = \max \left\{\hat X^i(x), \beta^{t^c_i}\cdot \sum_{y \in S_i} \mathcal P^{t^c_i}_i(y|x)\cdot V_{i}(y) \right\}. \label{m1}
\end{align}
Denote by $\mathcal U= \{\text{\textbf S, \textbf C}\}$ the set of two actions, stopping and continuation, where the stopping action ${\textbf S}$ bundles STOP and SWITCH into a single action, i.e., $\textbf{S} = \{\mbox{STOP, SWITCH}\}$ due to the definition of $\hat{X}^i(x)$ and as in the IID case, and the continuation action $\textbf{C} = \{\mbox{STAY}\}$. 
Then the above can be re-written as 
\begin{align}
V_i(x) = \max_{u \in \mathcal U} \left\{r(u,x)+\beta^{t^c_i} \cdot \sum_{y \in S_i}\mathcal P_i^{u,t^c_i}(y|x)\cdot V_i(y)\right\} \label{m2}
\end{align}
where
$r(\text{\textbf S},x) = \hat X^i$, $r(\text{\textbf C},x) = 0$, $\mathcal P^{\text{\textbf S},t^c_i}_i(y|x) = 0$, and $\mathcal P^{\text{\textbf C},t^c_i}_i(y|x) = \mathcal P^{t^c_i}_i(y|x)$.
\begin{theorem}
The set of Equations (\ref{m1}) or equivalently (\ref{m2}) have a unique solution.
\end{theorem}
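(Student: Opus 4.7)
The plan is to recognize (\ref{m2}) as a Bellman equation for an infinite-horizon discounted Markov decision process with effective discount factor $\beta^{t^c_i}$, and to obtain uniqueness via a contraction-mapping / Banach fixed-point argument, combined with backward induction over the stage index $i$.

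First I would set up the function space. The channel state spaces $S_i$ are finite (channel rewards have finite support by the assumption in Section~\ref{model}), so I work on the Banach space $\mathcal B(S_i)$ of bounded real-valued functions on $S_i$ equipped with the sup-norm $\|V\|_\infty = \max_{x \in S_i}|V(x)|$. I would handle the nested dependence on $V_{i+1}$ through $\hat X^i(x)$ by backward induction, starting from $i=N$: at the last stage $\hat X^N(x)=x$ (there is no SWITCH option), so the fixed-point problem can be stated directly without reference to any later stage. Assuming uniqueness of $V_{i+1}$ has already been established, $\hat X^i(x)$ is a well-defined bounded function, and the stage-$i$ equation becomes a self-contained fixed-point problem in $V_i$.

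Second, I would define the Bellman operator
\begin{align}
(\mathcal T V)(x) := \max_{u \in \mathcal U}\left\{r(u,x)+\beta^{t^c_i} \sum_{y \in S_i}\mathcal P_i^{u,t^c_i}(y|x)\,V(y)\right\},
\end{align}
and verify two things: (i) $\mathcal T$ maps $\mathcal B(S_i)$ into itself, which follows since $r(\mathbf S,x)=\hat X^i(x)$ is bounded by backward induction and $r(\mathbf C,x)=0$; (ii) $\mathcal T$ is a contraction with modulus $\beta^{t^c_i}<1$. The key inequality is the standard
\begin{align}
|(\mathcal T V)(x) - (\mathcal T V')(x)| &\le \beta^{t^c_i}\,\max_{u \in \mathcal U}\left|\sum_{y}\mathcal P_i^{u,t^c_i}(y|x)\bigl(V(y)-V'(y)\bigr)\right| \notag \\
&\le \beta^{t^c_i}\,\|V-V'\|_\infty,
\end{align}
using $|\max_u f(u)-\max_u g(u)| \le \max_u |f(u)-g(u)|$ together with the fact that $\mathcal P_i^{u,t^c_i}(\cdot|x)$ is a (sub-)probability measure. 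Taking a sup over $x$ gives $\|\mathcal T V - \mathcal T V'\|_\infty \le \beta^{t^c_i}\|V-V'\|_\infty$.

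Third, since $\beta = (1+1/T)^{-1} \in (0,1)$ and $t^c_i>0$, so $\beta^{t^c_i}<1$, the Banach fixed-point theorem on the complete space $(\mathcal B(S_i),\|\cdot\|_\infty)$ yields a unique fixed point $V_i^*$, i.e.\ a unique solution of (\ref{m1})/(\ref{m2}) at stage $i$. Propagating this backward from $i=N$ down to $i=1$ completes the proof. The main obstacle I anticipate is the cosmetic one of cleanly separating the intra-stage contraction argument from the inter-stage induction, since $\hat X^i$ couples stage $i$ to $V_{i+1}$ via a max with $c_i=\frac{T}{T+t^s_{i+1}}E\{V_{i+1}(X^{i+1})\}$; the backward-induction viewpoint resolves this cleanly because at each $i$ the quantity $c_i$ is a fixed bounded constant when the contraction is invoked, so the fixed-point argument is genuinely one-dimensional (over $V_i$ only) at each step.
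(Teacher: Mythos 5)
Your proof is correct and follows essentially the same route as the paper: the paper also establishes uniqueness by showing the Bellman operator $(\mathcal Tv)(x) = \max_{u\in\mathcal U}\{r(u,x)+\eta\sum_{y}v(y)\mathcal P^u(y|x)\}$ with modulus $\eta=\beta^{t^c_i}<1$ is a contraction (its Lemma~\ref{lemma_mc}) and invoking the contraction mapping theorem. If anything you are slightly more careful on two points: you use the sup-norm consistently (the paper declares $\|v\|=\sum_{x}|v(x)|$ but its contraction estimate actually bounds by $\max_{y}|v(y)-z(y)|$), and you make explicit the backward induction over the stage index that turns $\hat X^i$ into a fixed bounded function before the fixed-point argument is applied at stage $i$.
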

%
Our proof is based on the contraction mapping theorem \cite{Kumar:1986:SSE:40665} and the next lemma. 
\begin{lemma}
\label{52}
Let $\mathcal F$ be the class of all functions $v:\{1,2,...,S\} \rightarrow \mathcal R$. Define norm $||v|| := \sum_{x \in S}|v(x)|$ and a mapping $\mathcal T:\mathcal F \rightarrow \mathcal F$ by
$$
(\mathcal Tv)(x) := \max_{u \in \mathcal U} \{r(u,x)+\eta \cdot \sum_{y \in S}v(y)\cdot \mathcal P^u(y|x)\}
$$
$0 < \eta < 1$; then $\mathcal T$ is a contraction. \label{lemma_mc}
\end{lemma}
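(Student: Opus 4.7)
The plan is to carry out the standard Bellman-operator contraction argument, with the discount factor $\eta$ serving as the contraction constant, and then invoke the Banach fixed-point theorem for the ``unique solution'' conclusion the next theorem needs. The target is to show $\|\mathcal T v - \mathcal T w\| \le c\,\|v-w\|$ for some $c<1$ whenever $v,w \in \mathcal F$.

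First I would fix $x$ and obtain a pointwise bound on $|(\mathcal T v)(x) - (\mathcal T w)(x)|$ using the elementary inequality $|\max_a f(a) - \max_a g(a)| \le \max_a |f(a)-g(a)|$. Because the reward term $r(u,x)$ does not depend on $v$, it cancels, leaving
$$
|(\mathcal T v)(x) - (\mathcal T w)(x)| \;\le\; \eta \max_{u \in \mathcal U} \sum_{y \in S} \mathcal P^u(y|x)\,|v(y)-w(y)|.
$$
For each admissible action $\mathcal P^u(\cdot|x)$ is substochastic: it is a genuine probability measure when $u=\mathbf C$ and identically zero when $u=\mathbf S$. Therefore the inner sum is at most $\max_y |v(y)-w(y)|$, and this would immediately give the sup-norm contraction $\|\mathcal T v - \mathcal T w\|_\infty \le \eta \|v-w\|_\infty$.

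The main obstacle is that the lemma is stated in the $\ell^1$ norm $\|v\| = \sum_x|v(x)|$, whereas row-stochasticity of the kernel is precisely the property that makes a Bellman operator contract in $\ell^\infty$ rather than in $\ell^1$ (the column sums of a stochastic matrix need not be bounded by $1$). To bridge this gap I would either interchange the order of summation in the display above to get $\sum_y |v(y)-w(y)| \sum_x \mathcal P^{\mathbf C}(y|x)$ and absorb the column sums, which are finite because $S$ is finite, into the constant — yielding an inequality $\|\mathcal T v - \mathcal T w\| \le \eta M \|v-w\|$ that after finitely many iterates of $\mathcal T$ becomes a genuine contraction, so Banach applies to $\mathcal T^k$ and hence to $\mathcal T$ — or, more cleanly, invoke the equivalence of $\ell^1$ and $\ell^\infty$ on the finite set $\{1,\dots,S\}$ to transfer the $\ell^\infty$-contraction into existence and uniqueness of a fixed point. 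Either route closes the argument along the textbook lines of \cite{Kumar:1986:SSE:40665}; the only subtlety is this norm reconciliation, with the pointwise max-of-affine reduction being entirely routine.
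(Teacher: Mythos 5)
Your proposal is correct and follows essentially the same route as the paper: the paper picks $\mu=\argmax_u\{r(u,x)+\eta\sum_y v(y)\mathcal P^u(y|x)\}$ and bounds $(\mathcal Tv)(x)-(\mathcal Tz)(x)\le\eta\sum_y[v(y)-z(y)]\mathcal P^\mu(y|x)\le\eta\max_y|v(y)-z(y)|$, which is exactly your max-of-affine reduction followed by (sub)stochasticity of the kernel. The one place you go beyond the paper is the norm reconciliation, and you are right to worry about it: the paper defines $\|v\|=\sum_{x\in S}|v(x)|$ but then writes $\max_{y\in S}|v(y)-z(y)|=\|v-z\|$ and takes the pointwise bound to be the normed bound, i.e., its proof is really the $\ell^\infty$ contraction you derive, and the stated $\ell^1$ norm is an inconsistency in the lemma rather than something the paper actually handles. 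Your second repair (norm equivalence on the finite set $\{1,\dots,S\}$, or equivalently just restating the lemma in the sup norm) is the clean one; be careful with the first, since iterating the raw $\ell^1$ bound $\|\mathcal Tv-\mathcal Tw\|_1\le\eta M\|v-w\|_1$ gives $(\eta M)^k$, which does not shrink when $\eta M\ge1$ --- you must route the iteration through the $\ell^\infty$ estimate, e.g., $\|\mathcal T^kv-\mathcal T^kw\|_1\le |S|\,\eta^k\|v-w\|_1$, to make $\mathcal T^k$ a genuine $\ell^1$ contraction for large $k$. Either way the fixed point exists and is unique, which is all Theorem 4.2 needs.
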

The next result also immediately follows; the proof is omitted for brevity. 
\begin{corollary}[Threshold policy]
The optimal stopping rule reduces to a threshold policy.
\end{corollary}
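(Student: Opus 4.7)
The plan is to show that the stopping set $\Delta_i^s := \{x \in S_i : \hat{X}^i(x) \geq h(x)\}$, with $h(x) := \beta^{t^c_i}\sum_y \mathcal{P}^{t^c_i}_i(y|x) V_i(y)$ denoting the continuation value, takes the form $\{x : x \geq \lambda_i^*\}$ for a stage-dependent threshold $\lambda_i^*$. Since $\hat{X}^i(x) = \max\{x, c_i\}$ (with $c_i = \tfrac{T}{T+t^s_{i+1}} E\{V_{i+1}(X^{i+1})\}$) is already non-decreasing in $x$, the task reduces to showing that the advantage $\Delta(x) := \hat{X}^i(x) - h(x)$ is non-decreasing; the threshold $\lambda_i^*$ is then the smallest state at which $\Delta$ becomes non-negative.

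First I would establish that $V_i(\cdot)$ is non-decreasing in $x$, under the standard stochastic monotonicity assumption on $\mathcal{P}_i$ (if $x_1 \leq x_2$ then $\mathcal{P}_i(\cdot|x_1)$ is stochastically dominated by $\mathcal{P}_i(\cdot|x_2)$). This is a value-iteration induction using the contraction $\mathcal{T}$ of Lemma \ref{lemma_mc}: starting from $V_i^{(0)}\equiv 0$, the operator $\mathcal{T}$ preserves the cone of non-decreasing functions because $\hat{X}^i(\cdot)$ is non-decreasing and $\sum_y \mathcal{P}^{t^c_i}_i(y|x) v(y)$ inherits the monotonicity of $v$ via stochastic dominance. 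Uniform convergence of the iterates then lifts the property to the fixed point $V_i$, and monotonicity of $h(\cdot)$ follows immediately.

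Second, I would show $\Delta(\cdot)$ is non-decreasing, exploiting the discount $\beta^{t^c_i} < 1$. Heuristically, $h(\cdot)$ is a discounted, averaged version of $V_i$ and therefore grows strictly more slowly than $\hat{X}^i(\cdot)$, which has slope $1$ on the region $x > c_i$. A cleaner route that avoids ad hoc Lipschitz bookkeeping is to invoke the monotone-case criterion of optimal stopping (\cite{uclastopping}, Chapter 5): verify that the one-stage-lookahead stopping set $B := \{x : \hat{X}^i(x) \geq h(x)\}$ is closed under the continuation dynamics --- if $x \in B$ and $Y$ is the state reached after a STAY, then $Y \in B$ almost surely --- which follows from the monotonicity of $\hat{X}^i - h$ together with stochastic monotonicity of $\mathcal{P}_i$. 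The theorem then certifies that the one-stage-lookahead rule is globally optimal, and its threshold form is inherited directly from the monotonicity of $\hat{X}^i - h$. An analogous case analysis to the remark after the IID theorem then recovers the nested STOP/SWITCH/STAY structure: whether $\lambda_i^* < c_i$ or $\lambda_i^* \geq c_i$ determines whether the meta-stop action resolves to STOP, SWITCH, or is inactive.

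The main obstacle is that the paper does not explicitly declare a stochastic monotonicity assumption on $\mathcal{P}_i$, yet both steps above rely on it in an essential way; without it neither $V_i$ nor $h$ need be monotone in the scalar state, and a pointwise threshold policy could in principle fail. The cleanest resolution is to add stochastic monotonicity as a standing assumption (it is standard and satisfied by all common channel-state chains used in the OSA literature, notably Gilbert--Elliott), reducing the proof to the two short induction/lookahead arguments sketched above. The remaining work --- verifying absorption of $B$ and reading off the threshold --- is essentially one line once monotonicity is in hand, which is presumably why the authors chose to omit it.
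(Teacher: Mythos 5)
The paper does not actually prove this corollary --- it states that the result ``immediately follows'' from the contraction/uniqueness theorem and omits the argument --- so there is no in-paper proof to match against. Measured on its own merits, your proposal contains one genuinely valuable observation and one genuine gap. The valuable part is your diagnosis that a threshold in the scalar state $x$ cannot follow from uniqueness of the fixed point alone: it requires a stochastic monotonicity assumption on $\mathcal P_i$ (plus monotone rewards) that the paper never states, and your value-iteration induction showing that the operator $\mathcal T$ preserves the cone of non-decreasing functions is the standard and correct way to get $V_i$, and hence the continuation value $h(x)=\beta^{t^c_i}\sum_y \mathcal P^{t^c_i}_i(y|x)V_i(y)$, non-decreasing. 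This identifies a real unstated hypothesis in the paper.

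The gap is in your second step. Under exactly the monotonicity you assume, the advantage $\Delta(x)=\hat{X}^i(x)-h(x)$ is \emph{not} non-decreasing: on the region $x< c_i$ the stopping reward $\hat{X}^i(x)=\max\{x,c_i\}$ is identically $c_i$ (flat), while $h(x)$ is non-decreasing, so $\Delta$ is non-increasing there. Consequently the one-stage-lookahead set $B=\{x:\hat{X}^i(x)\geq h(x)\}$ is in general a union of a lower interval and an upper interval, not an upper set; the ``closed under the continuation dynamics'' verification does not follow from the monotonicity you cite (the argument as written is circular, since it invokes monotonicity of $\hat{X}^i-h$, which is the thing to be proved and is false on the flat region); and the resulting optimal policy can be SWITCH at low states, STAY at intermediate states, and STOP at high states --- a two-threshold (control-limit on the continuation region) structure rather than the single threshold $\{x\geq\lambda_i^*\}$ you set out to establish. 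This nested two-threshold form is in fact what the paper's own Remark and its Markovian decision table describe, so the right fix is to restate the corollary as: under stochastic monotonicity of $\mathcal P_i$ and a condition ensuring $x-h(x)$ is non-decreasing on $\{x\geq c_i\}$ (e.g.\ a Lipschitz bound making the discounted continuation value grow with slope at most $\beta^{t^c_i}<1$), the continuation (STAY) set is an interval, with SWITCH below it and STOP above it --- and then prove the two monotonicity claims separately on $\{x<c_i\}$ and $\{x\geq c_i\}$ rather than globally.
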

\begin{remark}
As may be expected, this threshold policy works in a way very similar to the IID case (only the numerical calculation differs): at stage/channel $i$, there is a SWITCH reward $c_i$ (expected throughput by switching away from $i$) and $\lambda^*$ by staying on the same channel.  The optimal decision is then based on the relation between $\lambda^*$ and $c_i$.
\end{remark}

\subsection{Properties of nested stopping policy}
We can similarly obtain a number of properties of the multiuser multichannel system as a result of the nested stopping policy under the Markovian model.  

\begin{theorem} [Monotonicity]
\label{53}
$E\{V_i\},i \in \Omega$ are all non-increasing functions of G.
\end{theorem}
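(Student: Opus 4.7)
The plan is to mirror the strategy from the IID case (Lemmas \ref{lemma2}, \ref{iidlg}, and the subsequent strong monotonicity result): first show a componentwise monotonicity of the value function in the load vector $\mathbf{G}$, then establish a load-balance inequality $\partial G_i/\partial G \geq 0$, and finally combine these to conclude that $E\{V_i\}$ is non-increasing in the scalar $G$.

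For the componentwise monotonicity I would proceed by backward induction on $i$. At stage $N$, equation (\ref{m1}) characterizes $V_N$ as the unique fixed point of the contraction $\mathcal{T}$ from Lemma \ref{52} with discount factor $\eta = \beta^{t^c_N}$ and reward $\hat{X}^N(x)=x$. Since $t^c_N$ is non-decreasing in $G_N$ by Proposition \ref{assump-t1} and $\beta\in(0,1)$, a larger $\mathbf{G}$ produces a smaller $\eta$. Starting the Banach iteration from the zero function, every iterate is pointwise non-increasing in $\eta$ on the cone of nonnegative functions; passing to the limit yields $V_N \geq V_N'$ pointwise whenever $\mathbf{G} \leq \mathbf{G}'$, and hence $E\{V_N\} \geq E\{V_N'\}$. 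For $i<N$, the only coupling to later stages is through the switch reward $c_i$ appearing in $\hat{X}^i(x)=\max\{x,c_i\}$ (using the same substitution as in the IID section). By the induction hypothesis and Proposition \ref{assump-t1}, both $t^s_{i+1}$ grows and $E\{V_{i+1}\}$ shrinks under $\mathbf{G}\leq \mathbf{G}'$, so $c_i$ is non-increasing and hence $\hat{X}^i(x)$ is pointwise non-increasing. Coupled with the shrinkage of $\beta^{t^c_i}$, the same Banach-iteration monotonicity argument delivers $E\{V_i\}\geq E\{V_i'\}$.

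The second step is the load-balance property $\partial G_i/\partial G \geq 0$ for every $i$. The argument parallels Lemma \ref{iidlg}: the nested stopping rule is a threshold policy in each channel, and the offered load on channel $i$ is the sum of external arrivals and arrivals switched in from earlier stages. As the aggregate $G$ grows, the contention and switching delays in every channel increase (Proposition \ref{assump-t1}), so the thresholds and stopping probabilities shift, but the stopping/switching fractions cannot simultaneously drive every $G_i$ downward while their sum grows; Assumption \ref{assupsta} rules out a single-channel sink, and a sign analysis on the equilibrium load-balance equations gives the claim. Combining the two steps, an increase in $G$ yields a componentwise non-decreasing $\mathbf{G}$, which by Step 1 yields non-increasing $E\{V_i\}$, proving the theorem.

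The main obstacle is the implicit, fixed-point nature of the Markovian value function: unlike the IID case, $V_i$ appears on both sides of (\ref{m1}) with no closed-form dependence on the parameters, so monotonicity in the discount factor and in $c_i$ must be pushed through the Banach iteration of Lemma \ref{52}. The key technical point is to verify that the Bellman operator $\mathcal{T}$ is simultaneously order-preserving in both the reward and the discount factor on the cone of nonnegative value functions, which rests on the nonnegativity of $\hat{X}^i(x)$ and the linearity of the continuation term. A secondary difficulty arises in the load-balance step, since the equilibrium arrival rates in the Markovian setting are themselves determined by a fixed-point relation coupling thresholds and rates; here the argument must carefully combine threshold monotonicity with Assumption \ref{assupsta} to rule out pathological concentrations of load.
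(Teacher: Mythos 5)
Your inductive handling of the reward is fine: arguing that $c_i$ is non-increasing in $\mathbf G$ (via Proposition~\ref{assump-t1} and the induction hypothesis) and hence $\hat X^i(x)=\max\{x,c_i\}$ decreases pointwise is exactly the paper's step $r'_n(u,x)\le r_n(u,x)$. The gap is in the core of your stage-$N$ (and inductive-step) comparison, where you reduce the effect of a larger load to nothing more than a smaller discount factor $\eta=\beta^{t^c_i}$ in the operator of Lemma~\ref{52}. In (\ref{m2}) the contention delay $t^c_i$ enters the continuation term twice: as the exponent of $\beta$ \emph{and} as the number of steps in the kernel $\mathcal P^{t^c_i}_i(y|x)$. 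When $G_i$ grows, the continuation value changes from $\beta^{t^c_i}\sum_y \mathcal P^{t^c_i}_i(y|x)V_i(y)$ to $\beta^{t^{c'}_i}\sum_y \mathcal P^{t^{c'}_i}_i(y|x)V'_i(y)$, and the map $t\mapsto \beta^{t}\sum_y \mathcal P^{t}_i(y|x)v(y)$ need not be monotone in $t$ for an arbitrary chain: for a slowly mixing or nearly periodic chain, being forced to sample less frequently can land the user in better states often enough to outweigh the extra discounting. (This complication is absent in the IID case, which is why Lemma~\ref{lemma2} can ignore it.) Your Banach-iteration comparison therefore compares two Bellman operators that differ in more than $\eta$, with the kernel silently held fixed, so the claimed ordering of iterates does not follow. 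The paper's proof confronts precisely this coupling: it works with the representation $V_i(x)=\max_{\mathbf u}E\sum_{k\ge 0}\beta^{k\, t^c_i} r_i(u_k,x_k)$, sandwiches $t^{c'}_i$ between consecutive multiples of $t^c_i$, and invokes convexity of $\beta^{t}\sum_y\mathcal P^{t}_i(y|x)\,y$ in $t$ to dominate each term. Some structural argument of this kind about the $t$-step kernel is indispensable and is the missing idea in your write-up.

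Two smaller points. First, the paper's own proof of Theorem~\ref{53} establishes monotonicity in the vector $\mathbf G$ only; the load-balance step you sketch is treated in the paper as a separate consequence whose proof is omitted, so including it is harmless, but your one-sentence ``sign analysis'' would need the full contradiction argument of Lemma~\ref{iidlg}, and it rests on the very monotonicity whose proof is incomplete above. Second, a direction slip: the iterates are non-\emph{decreasing} in $\eta$ (value grows with the discount factor); your stated conclusion has the right direction, but the sentence as written does not.
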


Following the above result we can derive similar properties of the nested stopping policy in the Markovian case as in the IID case, including ergodicity of the arrival processes, load balance and the non-increasing value functions in $T$. The proof of these are omitted due to brevity and their similarity to those in the IID case. 

\section{Numerical Results}\label{simu}

\subsection{The IID channel model}
We first consider a scenario of five independent channels with their channel condition (taken to be the instantaneous transmission rate measured in bytes per time unit) exponentially distributed over a finite range, with average rates given by
$
\{1/0.4,1/0.6,1/0.5,1/0.3,1/0.2\}
$. A single transmission period is set to $T=40$ time units. 
The level of contention/congestion measured by $t^c_i$ and $t^s_i$ (measured in time units) as a function of  load $G$ (measured in packet per unit time) is illustrated in Table \ref{iidtable} for channel 1.  These quantities are rounded off to the nearest integers when used in computing the optimal policy. \rev{We set packet length to be 1024 Bytes.}  
\begin{table}[!h]
\small
\begin{center}
  \begin{tabular}{ | c | c | c | c | c | c |}
    \hline
    Load & 0.1 & 0.2 & 0.3 & 0.4 & 0.5\\ \hline
    $t_i^c$ & 10.8 & 13.2 & 14.4 & 15.2 &15.8\\ \hline
    $t_i^s$ & 13.1 & 15.8 & 17.3 & 18.4 & 19.4\\ \hline
  \end{tabular}
\end{center}
    \caption{Contention levels}\label{iidtable}
\end{table}
\begin{figure}[!h]

\subfigure[Non-opportunistic v.s. nested stopping policy (channel 1) ]{
                \centering
                \includegraphics[width=0.5\textwidth,height=0.25\textwidth]{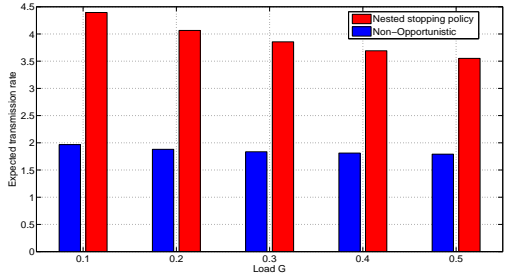}\label{fig1}
        }
\hspace{-1.3cm}
        \subfigure[Transmission rate w.r.t. $G$]{
                \centering
                \includegraphics[width=0.5\textwidth,height=0.25\textwidth]{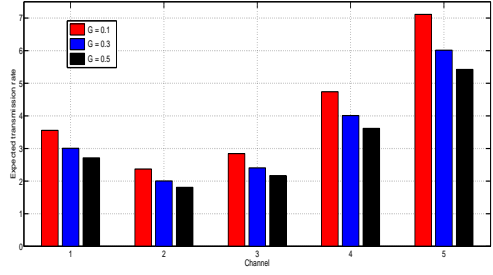}\label{fig2}
        }\\
        \subfigure[Transmission rate w.r.t. $T$ ($G=0.5$)]{
                \centering
                \includegraphics[width=0.5\textwidth,height=0.25\textwidth]{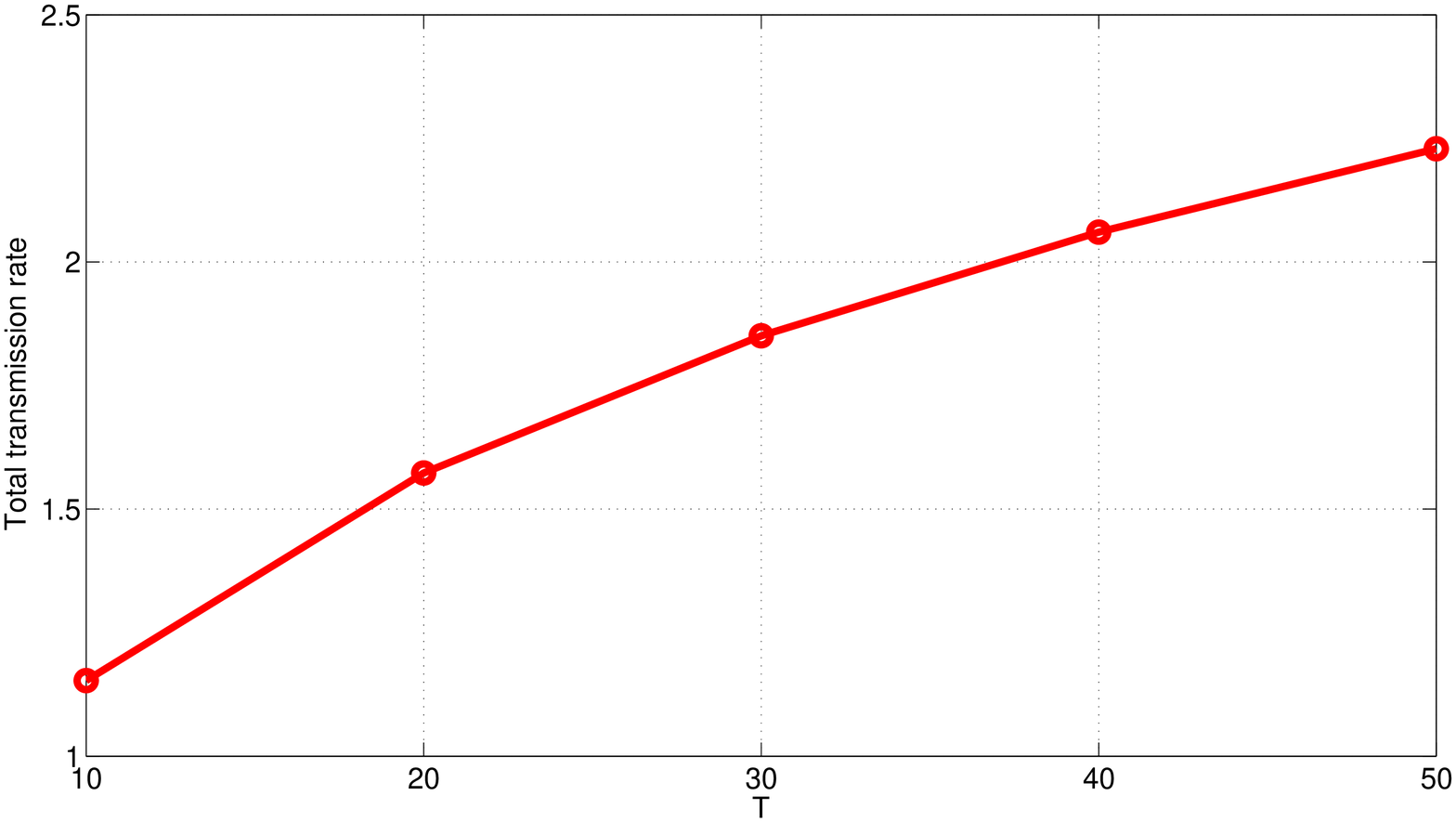}\label{fig3}
        }
\hspace{-1.3cm}
                \subfigure[Transmission rate w.r.t. number of channels ($G=0.5$)]{
                \centering
                \includegraphics[width=0.5\textwidth,height=0.25\textwidth]{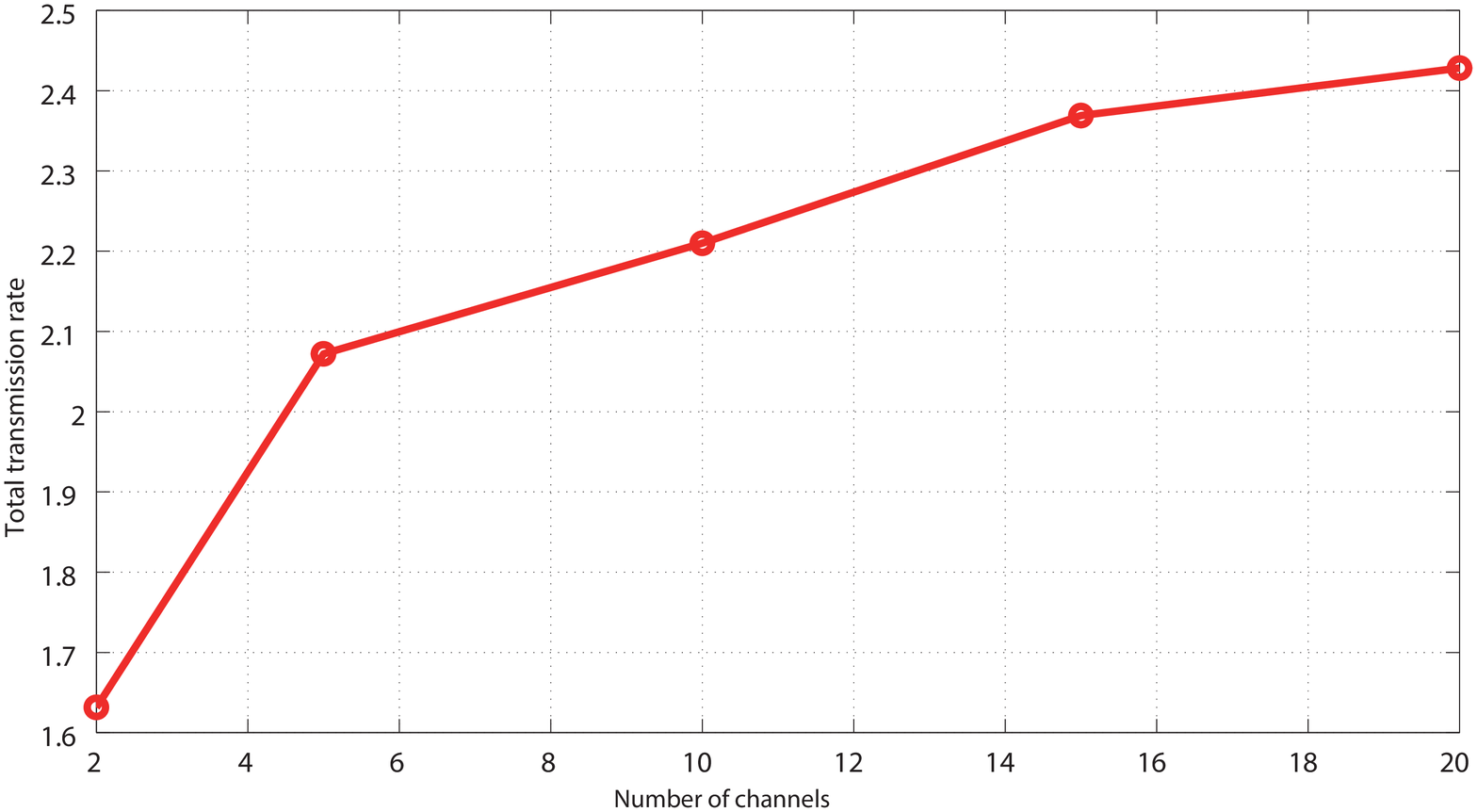}\label{fig4}
        }
\caption{Performance of the nested stopping policy under the IID channel model}
\end{figure}

In Figure \ref{fig1} we compare the nested stopping policy with a standard random access policy in which a user randomly selects a channel to use, followed by competing for channel access using IEEE 802.11 type of random access scheme (channel 1 is shown to illustrate).  Fig. \ref{fig1} shows that our nested stopping policy clearly helps improve the system performance w.r.t. average data rate \rev{(measured in average bytes per time unit transmitted, or average throughput)}; note however that this improvement decreases with the increase in load $G$, suggesting that the increase in congestion dampens the positive effect of opportunistic channel access. Fig. \ref{fig2} shows this more clearly the average data rate under different loads for each channel. Fig. \ref{fig3} shows that the throughput increases in the data transmission time $T$ as we have characterized, and Fig. \ref{fig4} shows that it also increases in the number of channels
(the simulation is done by adding channels with same statistics as given for the initial five), as the contention in each channel reduces.

Next we show the decision table for the optimal actions conditioned on continuation (STAY or SWITCH) for each channel (in this specific experiment we consider a user starts from channel 1).
\begin{table}[!h]
\small
\begin{center}
\begin{tabular}{ | c | c | c | c | c | c |}
    \hline
    Load & Ch 1 & Ch 2 & Ch 3 & Ch 4 & Ch 5\\ \hline
    0.05 & STAY & SWITCH & SWITCH & SWITCH &STAY\\ \hline
    0.1 & STAY & SWITCH & SWITCH & STAY &STAY\\ \hline
    0.3 & STAY & SWITCH & SWITCH & STAY &STAY\\ \hline
    0.5 & STAY & SWITCH & SWITCH & STAY &STAY\\ \hline
  \end{tabular}
\end{center}
  \caption{Decision of IID channels with different arrival rate}
\end{table}
As can be seen, channels 2 and 3 are of low quality so that the general decision is to switch away rather than waiting on the same channel if the decision is not to transmit immediately.  For channel 4, we see that the tendency to stay increases when the load is high due to the higher cost in switching than staying.  The decision to stay in channel 1 is more interesting: even though better average throughput may be obtained in channels 4 and 5, the cost in doing so is considerable as it has to go through channels 2 and 3.  By contrast, there is a SWITCH decision in channel 4 even though channel 4 is on average a better channel than channel 1.  



We also consider a more practical AWGN wireless channel model considering both propagation loss and shadowing effects. The transmission rates  are given by the Shannon capacity formula for AWGN channels: 
$R = \log(1+\rho | h |^{2}) \textrm{ nats/s/Hz}$,
where $h$ denotes the random channel gain with a complex Gaussian distribution. Moreover, the cdf of transmission rate is given by
$F_R(r) = 1 - \text{exp}(-\frac{\text{exp}(r)-1}{\rho}), r \geq 0$.
Consider a scenario with five channels with average SNR $\rho$ given by $10, 25, 20, 30, 10$, respectively, and similar performance results are observed as shown in Figures \ref{wlfig1} and \ref{wlfig2}.

\begin{figure}[!h]
\centering
\subfigure[Non-opportunistic v.s. nested stopping policy (channel 1)]{
                \centering
                \includegraphics[width=0.5\textwidth,height=0.25\textwidth]{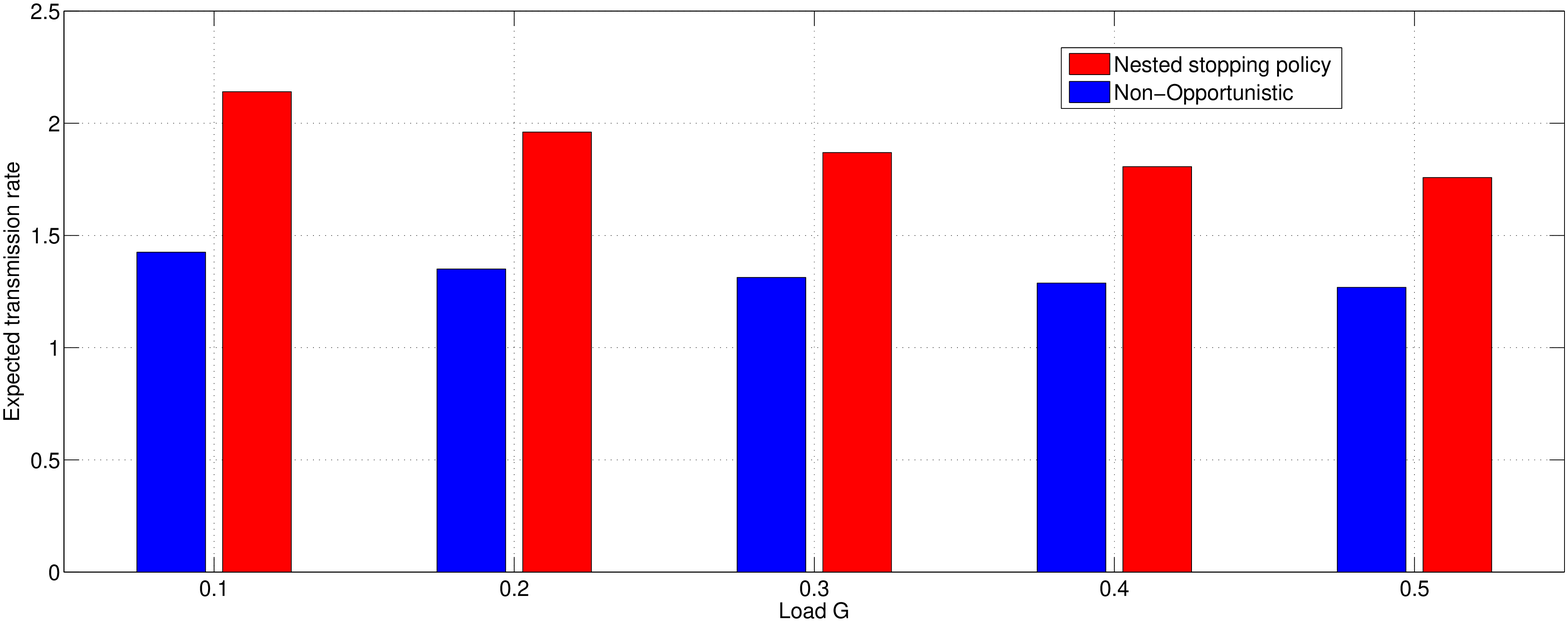}\label{wlfig1}
        }
\hspace{-1.3cm}
        \subfigure[Transmission rate w.r.t. $G$]{
                \centering
                \includegraphics[width=0.5\textwidth,height=0.25\textwidth]{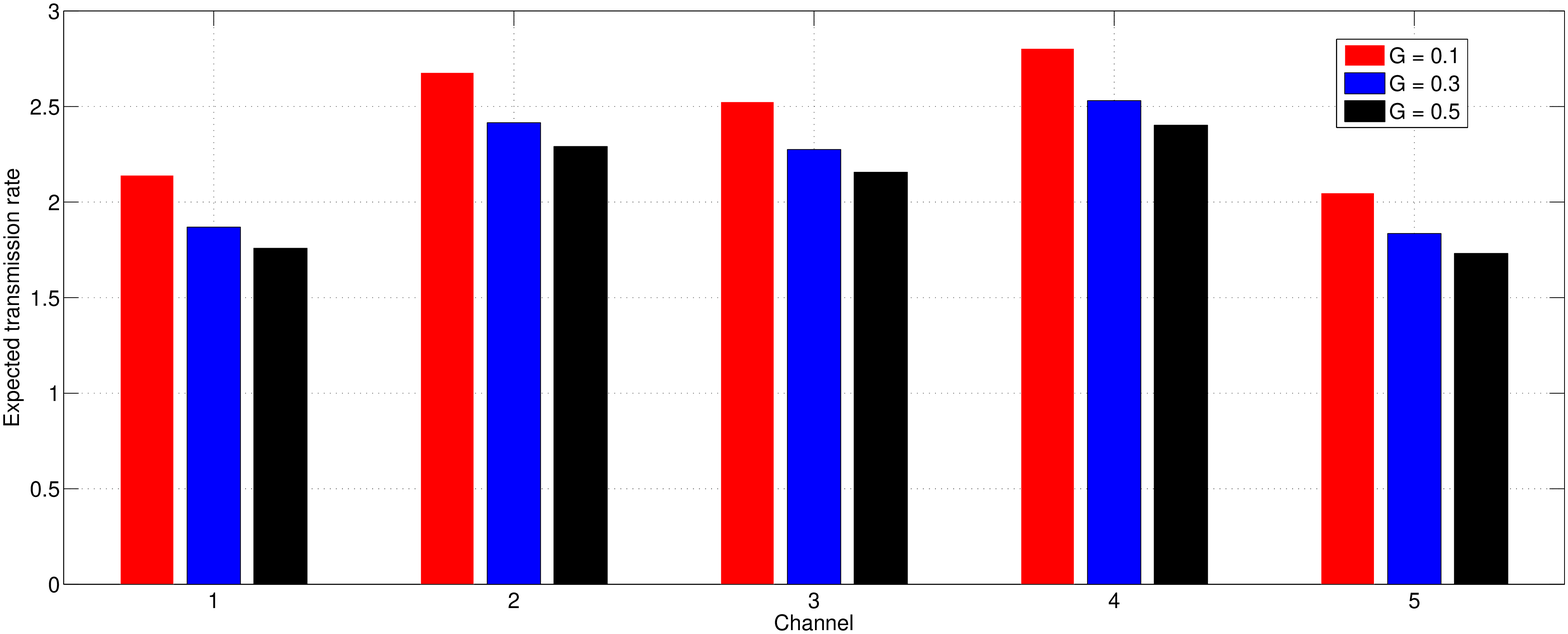}\label{wlfig2}
        }
\caption{Performance comparison under an AWGN wireless link model}
\end{figure}

\subsection{The Markovian channel model}

We now simulate the nested stopping policy under a Markovian channel model. We model all five channels' state (again taken to be the instantaneous transmission rate in bytes per time unit) change as a birth-death chain with five states and the associated transition probabilities given as follows:
\begin{align}
&\mathcal P_{k}(i+1|i) = 0.8, \mathcal P_{k}(i-1|i) = 0.2, 1 < i < 5, 1 \leq k \leq 5\\
&\mathcal P_{k}(2|1)= 0.8, \mathcal P_{k}(1|1) = 0.2,\mathcal P_{k}(5|5) = 0.8, \mathcal P_{k}(4|5) = 0.2, 1 \leq k \leq 5
\end{align}
For each channel the rewards increase in state indices, and are given in Table \ref{chrwd}. Transmission time is again set to be $T=40$ time units. 

\begin{table}[!h]
\small
\begin{center}
  \begin{tabular}{ | c | c | c | c | c | c |}
    \hline
    States & Ch 1 & Ch 2 & Ch 3 & Ch 4 & Chl 5\\ \hline
    1 & 10 & 15 & 5 & 10 & 5\\ \hline
    2 & 20 & 20 & 10 & 20 &10\\ \hline
    3 & 30 & 45 & 15 & 30 &15\\ \hline
    4 & 40 & 60 & 20 & 40 &20\\ \hline
    5 & 50 & 75 & 25 & 50 &25\\ \hline
  \end{tabular}
\end{center}
    \caption{Reward table for Markovian channels with different states}\label{chrwd}
\end{table}

\begin{figure}[!h]
\centering
\subfigure[Non-opportunistic v.s. nested stopping policy (channel 1)]{
                \centering
                \includegraphics[width=0.5\textwidth,height=0.25\textwidth]{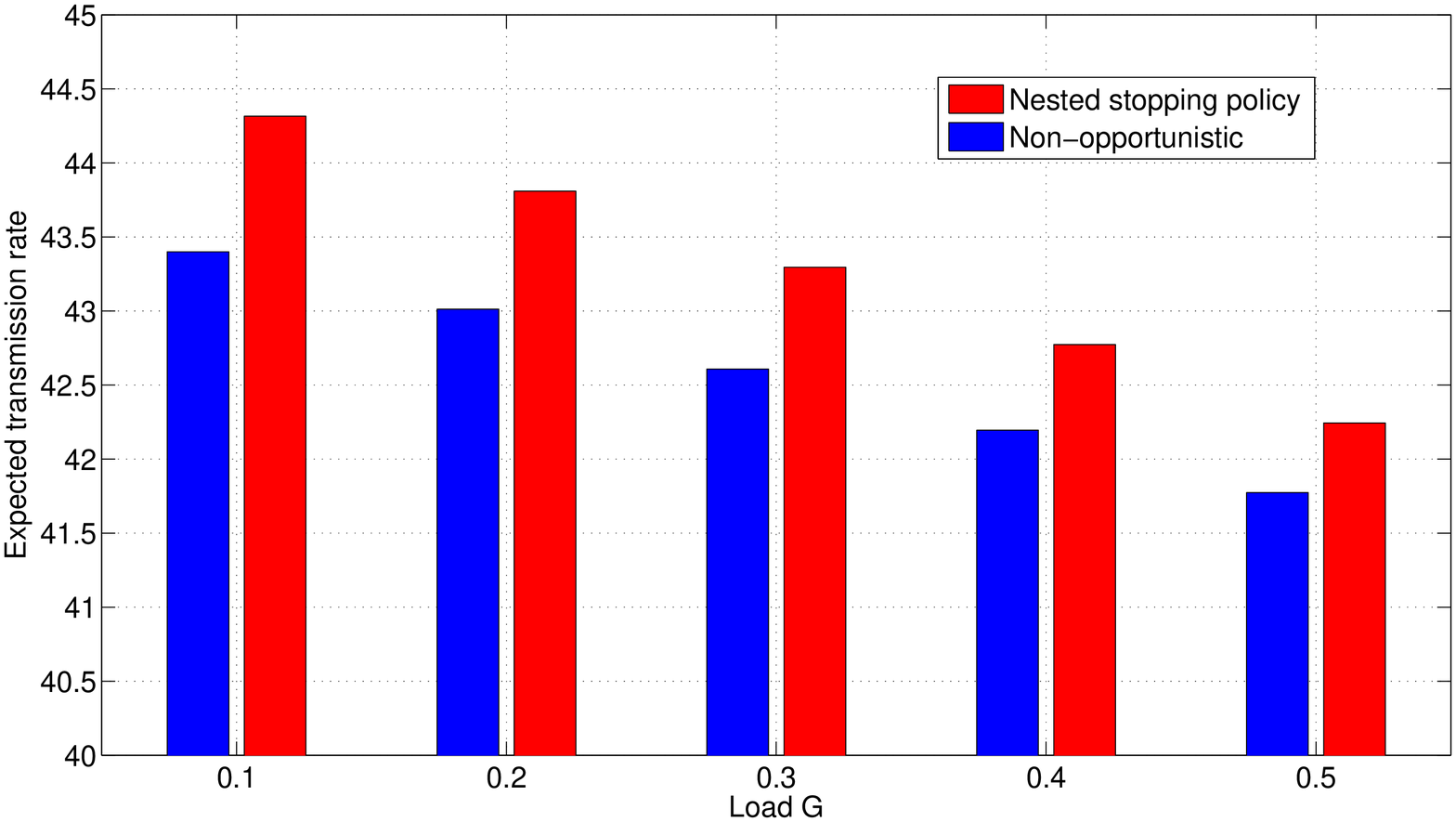}\label{mcfig4}
        }
\hspace{-1.3cm}
        \subfigure[Transmission rate w.r.t. $G$ ]{
                \centering
                \includegraphics[width=0.5\textwidth,height=0.25\textwidth]{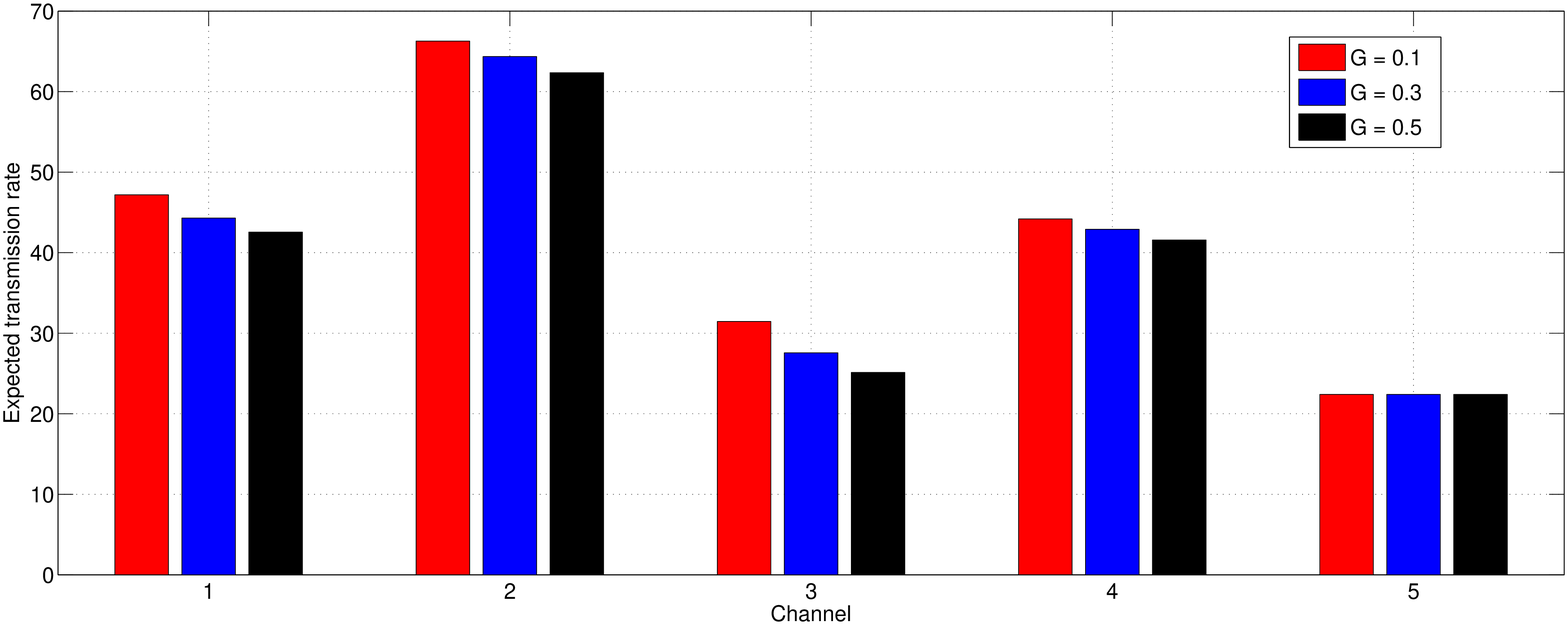}\label{mcfig6}
        }
        \subfigure[Transmission rate w.r.t. $T$ ($G = 0.1$)]{
                \centering
                \includegraphics[width=0.5\textwidth,height=0.25\textwidth]{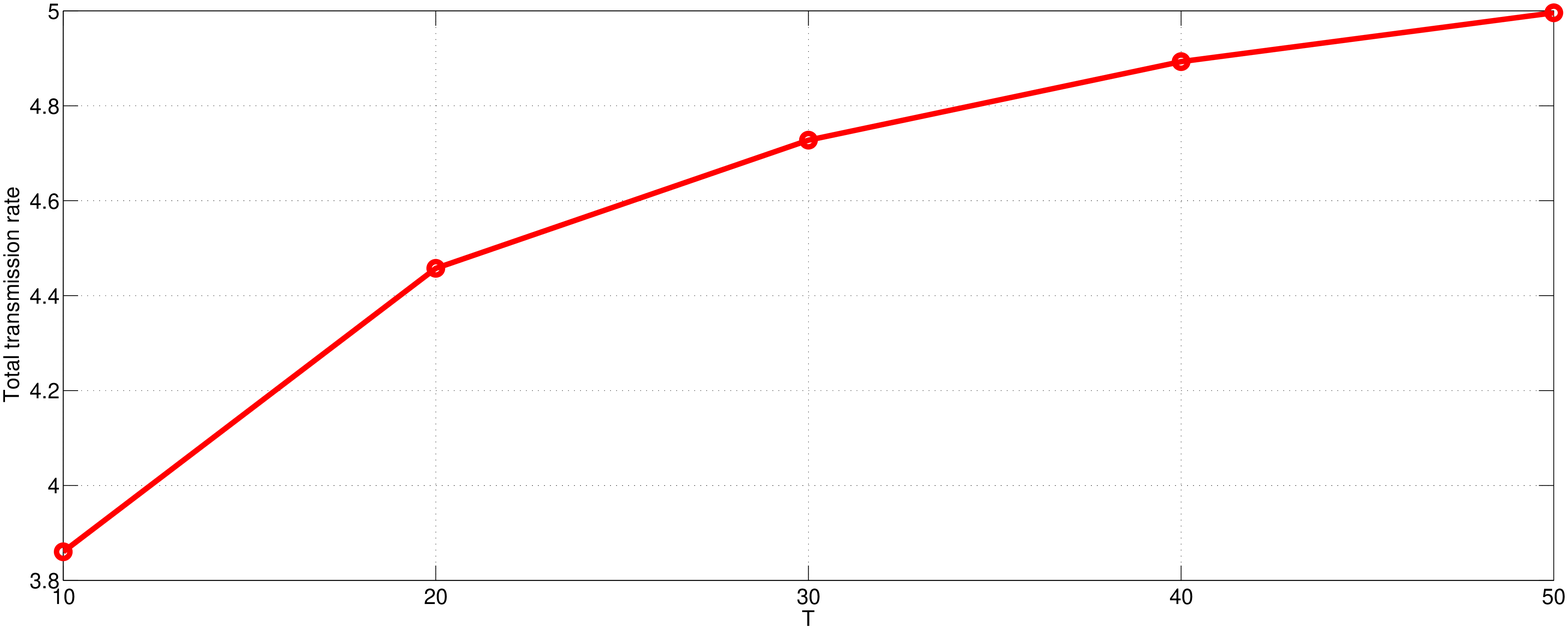}\label{mcfig7}
        }
\hspace{-1.3cm}
        \subfigure[Transmission rate w.r.t. number of channels ($G = 0.1$)]{
                \centering
                \includegraphics[width=0.5\textwidth,height=0.25\textwidth]{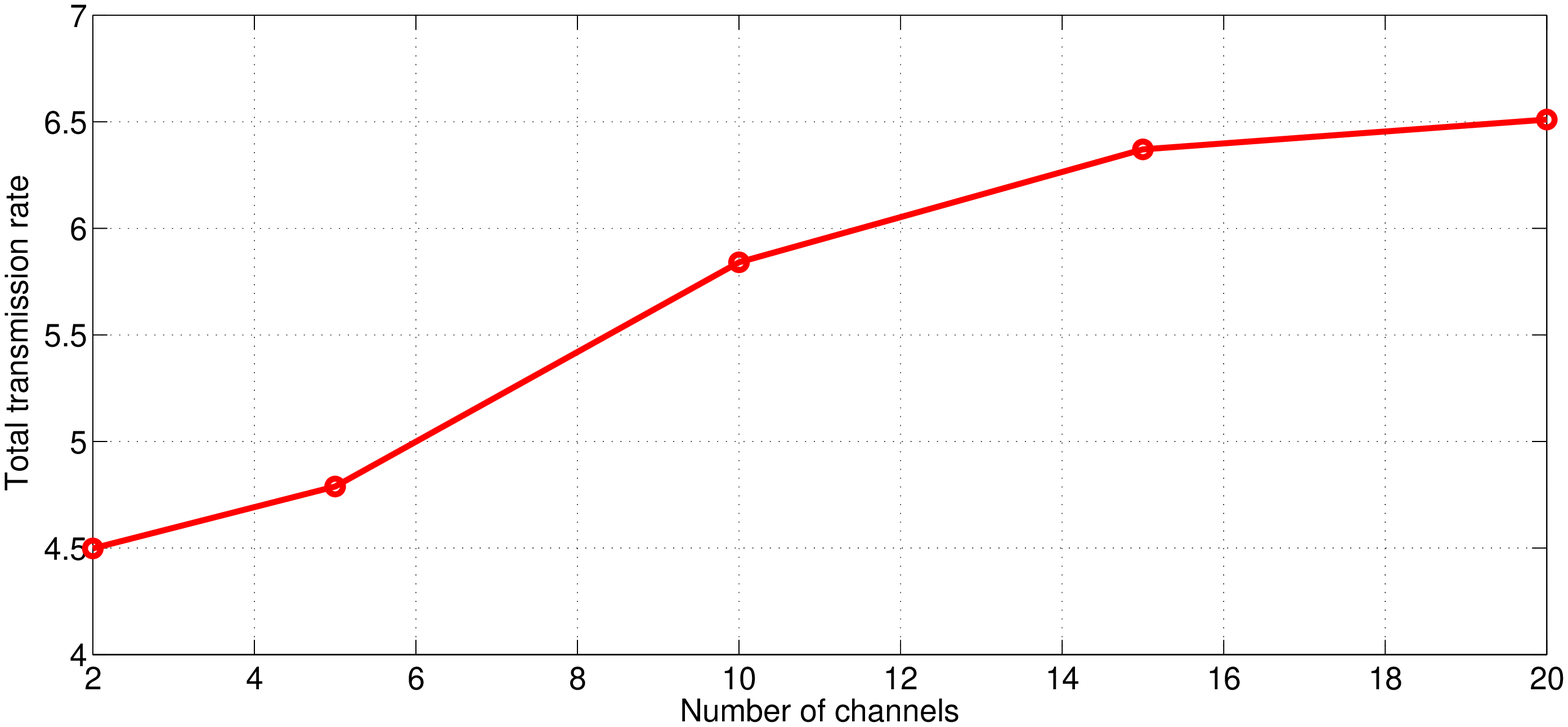}\label{mcfig7}
        }
\caption{Performance of the nested stopping policy under the Markovian channel model}
\end{figure}

The performance results are shown in Figs. \ref{mcfig4} and \ref{mcfig6}.  With a slow changing model, users are more likely to transmit using the currently sampled rate instead of releasing it and waiting for a future opportunity while risking another contention period.  In other words, opportunistic access in this case provides only marginal improvement over the non-opportunistic method.  
We also provide the decision table in this case in Table \ref{table2}.
\begin{table}[!h]
\small
\begin{center}
  \begin{tabular}{ | c | c | c | c | c | c |}
    \hline
    States & Ch1 & Ch 2 & Ch 3 & Ch 4 & Ch 5\\ \hline
    1 & SWITCH & STAY & SWITCH & STAY &STAY\\ \hline
    2 & SWITCH & STAY & SWITCH & STOP &STAY\\ \hline
    3 & SWITCH & STOP & STOP & STOP &STOP\\ \hline
    4 & STOP & STOP & STOP & STOP &STOP\\ \hline
    5 & STOP & STOP & STOP & STOP &STOP\\ \hline
  \end{tabular}
\end{center}
    \caption{Decision table for Markovian channels with different states}\label{table2}
\end{table}
Similar observations are made here: when the channel condition is good enough, the user would choose to transmit immediately (STOP);  the SWITCH decision is associated with poor conditions and when a user hopes to get much better conditions in the next channel; the STAY decision is made on a reasonably good channel and when there is limited prospect of getting better condition in the next channel. 

\subsection{Channel sensing order \& the ``no-recall'' approximation}
We next examine the effect of selecting different sequence of channels to use.
As discussed earlier, with multiple users ($m \geq 2$) it is very challenging to either jointly determine optimal sensing orders for all users involved in a cooperative setting, or determine the equilibrium sensing orders selected by selfish individuals in a non-cooperative setting (e.g., \cite{5738213,5272267}). 
For this reason in our analysis we have assumed that each user follows a fixed (which can be randomly chosen) order.  We now compare this choice where each user randomly picks a sequence in the IID case 
%
with a greedy sensing order in which users sense channels ordered in decreasing mean rewards. This comparison is shown in Fig. \ref{socomp}; it is clear that it is far better for each users to sense in a different order especially when the load is high. 


\begin{figure}[!h]
\centering
\subfigure[Channel sensing order comparison]{
                \centering
                \includegraphics[width=0.5\textwidth,height=0.25\textwidth]{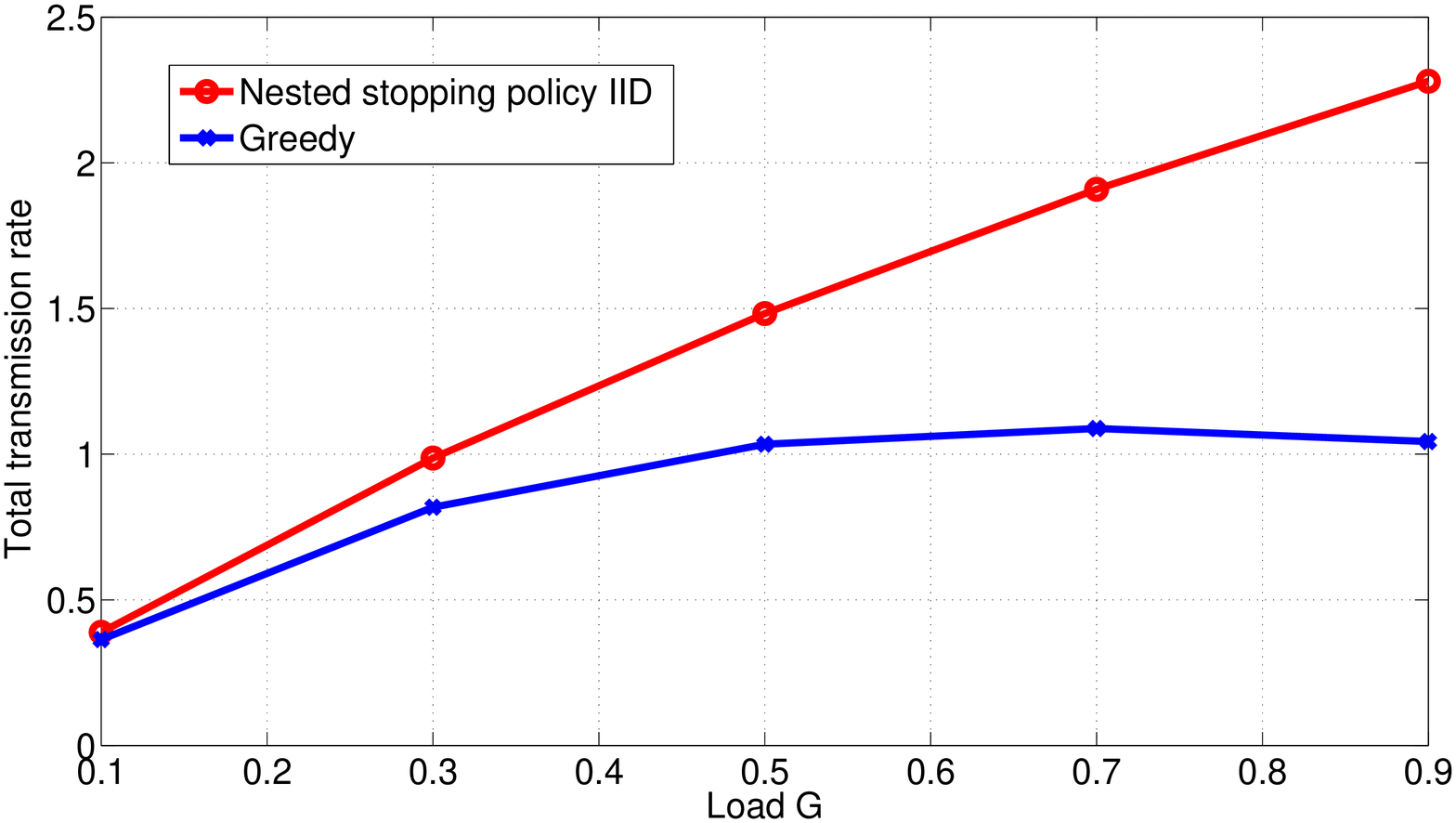}\label{socomp}
        }
\hspace{-1.3cm}
        \subfigure[Performance of approximation model]{
                \centering
                \includegraphics[width=0.5\textwidth,height=0.25\textwidth]{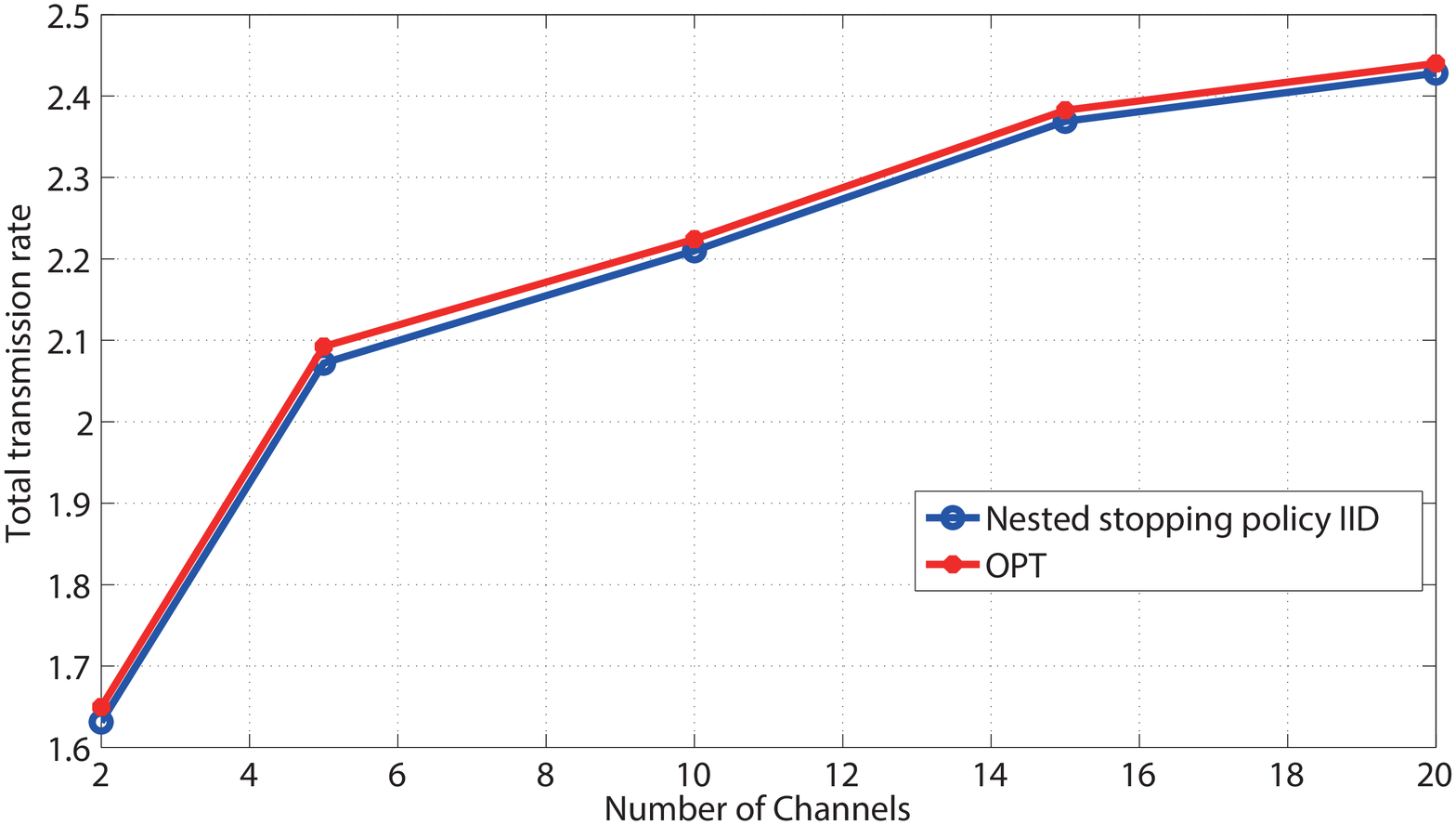}\label{optcomp}
        }

\end{figure}

We end this section by investigating the effect of the ``no-recall'' approximation introduced in Section \ref{model} and adopted in our analysis, by comparing it with the exact optimal solution.  We show this in the IID case in Fig. \ref{optcomp}; we see that this approximation has very little effect on the system performance.


\section{Conclusion}\label{conclusion}
In this paper we considered the collective switching of multiple users over multiple channels.  In addition, we considered finite arrivals.  Under such a scenario, the users' ability to opportunistically exploit temporal diversity (the temporal variation in channel quality over a single channel) and spectral diversity (quality variation across multiple channels at a given time) is greatly affected by the level of congestion in the system. 
We investigated the optimal decision process under both an IID and a Markovian channel models, and evaluate the extent to which congestion affects potential gains from opportunistic dynamic channel switching.
\begin{spacing}{0}
\bibliography{myref}

\begin{thebibliography}{10}

\bibitem{Ahmad:2009:OMS:1669634.1669645}
Sahand Haji~Ali Ahmad, Mingyan Liu, Tara Javidi, Qing Zhao, and Bhaskar
  Krishnamachari.
\newblock {Optimality of Myopic Sensing in Multichannel Opportunistic Access}.
\newblock {\em IEEE Trans. Inf. Theor.}, 55(9):4040--4050, September 2009.

\bibitem{5454399}
W.U. Bajwa, J.~Haupt, A.M. Sayeed, and R.~Nowak.
\newblock {Compressed Channel Sensing: A New Approach to Estimating Sparse
  Multipath Channels}.
\newblock {\em Proceedings of the IEEE}, 98(6):1058--1076, 2010.

\bibitem{4604743}
N.B. Chang and Mingyan Liu.
\newblock {Optimal Competitive Algorithms for Opportunistic Spectrum Access}.
\newblock {\em Selected Areas in Communications, IEEE Journal on}, 26(7):1183
  --1192, september 2008.

\bibitem{DBLP:journals/ton/ChangL09}
Nicholas~B. Chang and Mingyan Liu.
\newblock {Optimal Channel Probing and Transmission Scheduling for
  Opportunistic Spectrum Access}.
\newblock {\em IEEE/ACM Trans. Netw.}, 17(6):1805--1818, 2009.

\bibitem{5738213}
Ho~Ting Cheng and Weihua Zhuang.
\newblock {Simple Channel Sensing Order in Cognitive Radio Networks}.
\newblock {\em Selected Areas in Communications, IEEE Journal on},
  29(4):676--688, 2011.

\bibitem{5272267}
Rongfei Fan and Hai Jiang.
\newblock {Channel Sensing-Order Setting in Cognitive Radio Networks: A
  Two-User Case}.
\newblock {\em Vehicular Technology, IEEE Transactions on}, 58(9):4997--5008,
  2009.

\bibitem{uclastopping}
Thomas~S. Ferguson.
\newblock {Optimal Stopping and Applications}.
\newblock {\em Mathematics Department, UCLA}, 2006.

\bibitem{4786510}
Hai Jiang, Lifeng Lai, Rongfei Fan, and H.V. Poor.
\newblock {Optimal Selection of Channel Sensing Order in Cognitive Radio}.
\newblock {\em Wireless Communications, IEEE Transactions on}, 8(1):297--307,
  2009.

\bibitem{Kanodia_moar:a}
V.~Kanodia, A.~Sabharwal, and E.~Knightly.
\newblock {MOAR: A Multi-Channel Opportunistic Auto-Rate Media Access Protocol
  for Ad Hoc Networks}.
\newblock In {\em Proceedings of the First International Conference on
  Broadband Networks}, BROADNETS '04, pages 600--610, Washington, DC, USA,
  2004. IEEE Computer Society.

\bibitem{Kumar:1986:SSE:40665}
P.~R. Kumar and Pravin Varaiya.
\newblock {\em {Stochastic systems: estimation, identification and adaptive
  control}}.
\newblock Prentice-Hall, Inc., Upper Saddle River, NJ, USA, 1986.

\bibitem{citeulike:6090701}
Ying~C. Liang, Yonghong Zeng, E.~C.~Y. Peh, and Anh~T. Hoang.
\newblock {{Sensing-Throughput} Tradeoff for Cognitive Radio Networks}.
\newblock {\em IEEE Transactions on Wireless Communications}, 7(4):1326--1337,
  2008.

\bibitem{Liu_indexabilityof}
Keqin Liu and Qing Zhao.
\newblock {Indexability of Restless Bandit Problems and Optimality of Whittle
  Index for Dynamic Multichannel Access}.
\newblock {\em IEEE Trans. Inf. Theor.}, 56(11):5547--5567, November 2010.

\bibitem{Liu12}
Y.~Liu, M.~Liu, and J.~Deng.
\newblock {Is Diversity Gain Worth the Pain: a Delay Comparison Between
  Opportunistic Multi-Channel MAC and Single-Channel MAC}.
\newblock {\em Proc. of the 31st IEEE Conference of Computer Communications
  (INFOCOM '12), Minisymposium}, March 25-30 2012.

\bibitem{MeyTwe93}
Sean~P. Meyn and R.~L. Tweedie.
\newblock {Stability of Markovian Processes III: Foster-Lyapunov Criteria for
  Continuous-Time Processes}.
\newblock {\em Advances in Applied Probability}, 25(3), 1993.

\bibitem{4336106}
J.L. Paredes, G.R. Arce, and Z.~Wang.
\newblock {Ultra-Wideband Compressed Sensing: Channel Estimation}.
\newblock {\em Selected Topics in Signal Processing, IEEE Journal of},
  1(3):383--395, 2007.

\bibitem{Shu:2009:TSC:1614320.1614325}
Tao Shu and Marwan Krunz.
\newblock {Throughput-efficient Sequential Channel Sensing and Probing in
  Cognitive Radio Networks Under Sensing Errors}.
\newblock In {\em Proceedings of the 15th annual international conference on
  Mobile computing and networking}, MobiCom '09, pages 37--48, New York, NY,
  USA, 2009. ACM.

\bibitem{Tan:2010:DOS:1833515.1833882}
Sheu-Sheu Tan, Dong Zheng, Junshan Zhang, and James Zeidler.
\newblock {Distributed Opportunistic Scheduling for Ad-hoc Communications Under
  Delay Constraints}.
\newblock In {\em Proceedings of the 29th conference on Information
  communications}, INFOCOM'10, pages 2874--2882, Piscataway, NJ, USA, 2010.
  IEEE Press.

\bibitem{4723352}
Qing Zhao, B.~Krishnamachari, and Keqin Liu.
\newblock {On Myopic Sensing for Multi-channel Opportunistic Access: Structure,
  Optimality, and Performance}.
\newblock {\em Wireless Communications, IEEE Transactions on}, 7(12):5431
  --5440, december 2008.

\bibitem{DBLP:journals/jsac/ZhaoTSC07}
Qing Zhao, Lang Tong, Ananthram Swami, and Yunxia Chen.
\newblock {Decentralized Cognitive MAC for Opportunistic Spectrum Access in Ad
  hoc Networks: A POMDP Framework}.
\newblock {\em IEEE Journal on Selected Areas in Communications},
  25(3):589--600, 2007.

\bibitem{Zheng:2009:DOS:1669334.1669354}
Dong Zheng, Weiyan Ge, and Junshan Zhang.
\newblock {Distributed Opportunistic Scheduling for Ad-hoc Networks with Random
  Access: an Optimal Stopping Approach}.
\newblock {\em IEEE Trans. Inf. Theor.}, 55(1):205--222, January 2009.

\end{thebibliography}
\end{spacing}
\appendices
\section{Proof of Lemma \ref{lemma2}: monotonicity of value function in $\mathbf{G}$}

\label{appendix-B}
We prove this by induction. When $i = N$, i.e., the last stage, we have
$
\lambda^*_N \bar{t}^c_N = \int_{\lambda^*_N}^{\bar{X}^N} (x-\lambda^*_N)f_{X^N}(x)dx
$, 
where $\bar{t}^c_i = t^c_i/T$.  
As $\bar t^c_N$ is a non-decreasing function in $G_N$, it is also non-decreasing in $\mathbf{G}$. Thus with the increase in $\bar t^c_N$, the solution $\lambda^*_N$ cannot be increasing, proving that $\lambda^*_N$ is a non-increasing function of $\textbf{G}$.  Since our value functions ($E\{\max (X^i,\lambda^*_i)\}$) are non-decreasing functions of the thresholds $\lambda^*$s, we have now show that they are non-increasing in $\mathbf{G}$. 
Next assume the non-decreasing property holds for $i = n+1, \cdots, N-1$. Consider $i = n$. We prove this in the cases $\lambda^*_n < c_n$ and $\lambda^*_n \geq c_n$, respectively. For the case $\lambda^*_n \geq c_n$, we have
$
\lambda^*_n \bar{t}^c_N = \int_{\lambda^*_n}^{\bar{X}^n} (x-\lambda^*_n)f_{X^n}(x)dx
$. 
Using similar argument as in the case $i=N$ we know $\lambda^*_n$ is non-increasing in $\textbf{G}$. For the case $\lambda^*_n < c_n$,
$
\lambda^*_n = \frac{\int_{c_n}^{\bar{X}^n}xf_{X^n}(x)dx+c_n\cdot P(X^n \leq c_n)}{1+\bar{t}^c_n}
$, and we get $E\{V_n\} = \int_{c_n}^{\bar{X}^n}xf_{X^n}(x)dx+c_n\cdot P(X^n \leq c_n)$.  Taking the derivative of $E\{V_n\}$ with respect to $\mathbf{G}$ we get
\begin{align}
\partial E\{V_n\}/\partial \textbf{G} &= \frac{\partial [E(X^n)-\int_{0}^{c_n}xf_{X^n}(x)dx+c_nP(X^n \leq c_n)]}{\partial c_n}\cdot \frac{\partial c_n}{\partial \textbf{G}}\\
\partial c_n/\partial \textbf{G} &= \frac{\frac{\partial E\{V_{n+1}\}}{\partial \textbf{G}}(T+t^s_{n+1})-E\{V_{n+1}\}\frac{\partial t^s_{n+1}}{\partial G_{n+1}}}{(T+t^s_{n+1})^2}
\end{align}
By induction hypothesis we know $\frac{\partial E\{V_{n+1}\}}{\partial \textbf{G}} \leq 0$ and $\frac{\partial t^s_{n+1}}{\partial G_{n+1}} \geq 0$. Therefore we conclude $\frac{\partial c_n}{\partial G_n} \leq 0$,$\frac{\partial E\{V_n\}}{\partial \textbf{G}} \leq 0$, completing the induction step and the proof. \qed

\section{Proof of Lemma \ref{staiid}: ergodicity of $\mathbf{G}$}
\label{appendix-ergo}

By Assumption \ref{assupsta} there exists a threshold $\tilde{G}_{i}$ such that
$E\{V_i(G_i)\} < E\{V_{-i} (G_{-i})\}$, $\forall i \in \Omega$, for all $G_i \geq \tilde{G}_i$, where $G_{-i}$ denotes the aggregated load on all other channel except channel $i$, and $E\{V_{-i} (G_{-i})\}$ is defined as the average reward/rate-of-return of all other channels except $i$. 
In this case, the arrivals to all other channels except $i$ will not switch to channel $i$, i.e., under loads $G_i> \tilde{G}_i$ the probability of load $G_i$ drifting higher is 0 almost surely. Define any increasing, unbounded Lyapunov function $L(G_i)$ on $[0,G]$ (e.g., $L(G_i) = \frac{1}{G-G_i}$), we have
$
E_{\tilde{G}_i}[L(\tilde{G}_i)|G_i] \leq L(G_i)
$.
By the Foster-Lyapunov criteria \cite{MeyTwe93} we establish the ergodicity of the system load vector. 
\qed

\section{Proof of Lemma \ref{iidlg}: load balance}
\label{app-lb}
We prove this by induction on $N$. When $N=1$, i.e., the system degenerates to a single channel case, the claim holds obviously. Assume the claim holds for $N=2, \cdots, n-1$, and now consider the case $N=n$.  Suppose we increase the total load from $G$ to $G'$, and assume that without loss of generality the load to channel 1 decreases, i.e., $G_1^{'} < G_1$. By the induction hypothesis, the loads on all other channels have increased, i.e., $G_i^{'}> G_i$, $\forall i\neq 1$.  As a result, their corresponding value functions decrease by the previous lemma, i.e., $E\{V^{'}_i\} < E\{V_i\}$, $\forall i\neq 1$.  This means that the amount switching out of channel 1 must be non-increasing, due to the fact that the threshold of switching $c_1$ is a non-increasing function of $\mathbf G$, 
while the amount switching into channel 1 must be non-decreasing, leading to an overall non-decreasing load on channel 1, which is a contradiction. 
\qed

\section{Proof of Lemma \ref{iidt}: monotonicity of value functions in $T$}
\label{app-C}

When $i=N$, i.e. the last stage, we have
$
\lambda^*_N \bar{t}^c_N = \int_{\lambda^*_N}^{\bar{X}^N} (x-\lambda^*_N)f_{X^N}(x)dx
$. Following a similar argument as in the monotonicity in $G$, with the decrease in $\bar{t}^c_N$, the solution $\lambda^*_N$ cannot be decreasing, proving that $\lambda^*_N$ is a non-decreasing function of $T$. Assume now the claim holds for $i=n+1, \cdots, N-1$. When $i=n$, consider two cases. For the case $\lambda^*_n \geq c_{n}$, we have $\lambda^*_n \bar{t}^c_{n} = \int_{\lambda^*_n}^{\bar{X}^n} (x-\lambda^*_n)f_{X^n}(x)dx$.  We know $\lambda^*_n$ is a non-decreasing function of $T$. For the  case $\lambda^*_n < c_n$,
$
\lambda^*_n = \frac{\int_{c_n}^{\bar{X}^n}xf_{X^n}(x)dx+c_n\cdot P(X^n \leq c_n)}{1+\bar{t}^c_n}
$. We have
$
E\{V_n\} = \int_{c_n}^{\bar{X}^n}xf_{X^n}(x)dx+c_n\cdot P(X^n \leq c_n)
$
and taking the derivative of $E\{V_n\}$ w.r.t. $T$ we have 
\begin{align}
\partial E\{V_n\}/\partial T = \frac{\partial [E(X^n)-\int_{0}^{c_n}xf_{X^n}(x)dx+c_nP(X^n \leq c_n)]}{\partial c_n}\cdot \frac{\partial c_n}{\partial T}
\end{align}
With basic algebra (we will omit here) and combine with the fact $\frac{\partial E\{V_{n+1}\}}{\partial T} \geq 0$ (induction hypothesis) and $\frac{\partial t^s_{n+1}}{\partial T} \geq 0$, we conclude $\frac{\partial c_n}{\partial T} > 0$,$\frac{\partial E\{V_n\}}{\partial T} >0$, completing the induction step and the proof.  \qed


\section{Proof of Lemma \ref{52}: contraction}
\label{lemma52}
For $v,z \in \mathcal F$, we have
\begin{align}
(\mathcal Tv)(x)-(\mathcal Tz)(x) &= \max_{u \in \mathcal U} \{r(u,x)+\eta\cdot \sum_{y \in S}v(y)\cdot \mathcal P^u(y|x)\} \nonumber \\
&- \max_{u \in \mathcal U} \{r(u,x)+\eta\cdot \sum_{y \in S}z(y)\cdot \mathcal P^u(y|x)\}
\end{align}
Let $\mu = \arg\max_{u \in \mathcal U}\{r(u,x)+ \eta\sum_{y \in S}v(y)\cdot \mathcal P^u(y|x)\}$, then
\begin{align}
(\mathcal Tv)(x)&-(\mathcal Tz)(x) =  \{r(\mu,x)+\eta\cdot \sum_{y \in S}v(y)\cdot \mathcal P^{\mu}(y|x)\} - \max_{u \in \mathcal U} \{r(u,x)+\eta\cdot \sum_{y \in S}z(y)\cdot \mathcal P^u(y|x)\} \nonumber \\
&\leq \{r(\mu,x)+\eta\cdot \sum_{y \in S}v(y)\cdot \mathcal P^{\mu}(y|x)\} -  \{r(\mu,x)+\eta\cdot \sum_{y \in S}z(y)\cdot \mathcal P^{\mu}(y|x)\} \nonumber \\
&= \eta \sum_{y \in S}[v(y)-z(y)]\cdot \mathcal P^{\mu}(y|x) \leq \eta \max_{y \in S} |v(y)-z(y)| = \eta ||v-z||
\end{align}
Similarly by reversing the order of $z,v$ we have
$
(\mathcal Tz)(x)-(\mathcal Tv)(x) \leq \eta ||v-z||
$.
Therefore we reach at
$||\mathcal Tv-\mathcal Tz|| \leq \eta ||v-z||$,
i.e., $\mathcal T$ is a contraction. \qed

\section{Proof of Theorem \ref{53} : monotonicity of the value function in $\mathbf G$}
\label{theorem53}
As proved in \cite{Kumar:1986:SSE:40665}, $V_i(x),x \in S,i \in \Omega$ can be interpreted as follows
\begin{align}
V_i (x) = \max_{\mathbf u} E~\sum_{k=0}^{\infty}\beta^{k\cdot t^c_i} \cdot r_i(u_k,x_k)
\end{align}
Consider the expected maximum throughput at the last stage, i.e.
$
V_N(x) = \max_{\mathbf u} E~\sum_{k=0}^{\infty}\beta^{k\cdot t^c_N} \cdot r_N(u_k,x_k).
$
Consider a $G^{'}_N \geq G_N$ which gives us $t^{c'}_N \geq t^c_N$. Consider an arbitrary term in the above sum $\beta^{k\cdot t^{c'}_N}$, and  
there exists a $k'$ such that
$
k'\cdot t^c_N \leq  t^{c'}_N \leq (k'+1)\cdot t^c_N
$.
Together with the fact that $\beta^{t}\cdot \sum_{y}\mathcal P^t(y|x)\cdot y$ is convex w.r.t. $t$ we know
\begin{align}
\max \{\beta^{(k'+1)\cdot t^c_N }\cdot &\sum_{y }\mathcal P^{(k'+1)\cdot t^c_N }(y|x)\cdot y, \beta^{k'\cdot t^c_N }\cdot \sum_{y} \mathcal P^{k'\cdot t^c_N }(y|x)\cdot y\} \nonumber \\
&\geq \beta^{k \cdot t^{c'}_N }\cdot \sum_{y}\mathcal P^{k\cdot t^{c'}_N }(y|x)\cdot y
\end{align}
\begin{align}
V^{'}_N(x) &= \max_{\mathbf u} E~\sum_{k=0}^{\infty}(\beta^{k\cdot t^{c^{'}}_N })^k\cdot r_N(u_k,x_k) \leq \max_{\mathbf u} E~\sum_{k=0}^{\infty}\beta^{k\cdot t^{c}_N } \cdot r_N(u_k,x_k) = V_N(x)
\end{align}
Therefore as $E\{V_N\} = \sum_x \pi_x \cdot V_N(x)$, and we know $E\{V_N\}$ is a non-increasing function of $\mathbf{G}$. This establishes the induction basis. Now assume that the theorem holds for $i = n+1, \cdots, N-1$. Consider the case $i = n$. Assume $\mathbf{G^{'} > G}$. As discussed in the IID section we have
$r^{'}_n(u,x) \leq r_n(u,x)$. (This can be proved by taking derivatives of $c_i$ with respect to $\mathbf{G}$ and by induction hypothesis $\frac{\partial E\{V_{n+1}\}}{\partial \mathbf{G}} \leq 0$). Therefore again similarly as argued above we have 
\begin{align}
V^{'}_n(x) &= \max_{\mathbf u} E~\sum_{k=0}^{\infty}\beta^{k\cdot t^{c'}_n} \cdot r^{'}_n(u_k,x_k) \leq \max_{\mathbf u} E~\sum_{k=0}^{\infty}\beta^{k\cdot t^{c}_n} \cdot r^{'}_n(u_k,x_k) \nonumber \\
&\leq \max_{\mathbf u} E~\sum_{k=0}^{\infty}\beta^{k\cdot t^{c}_n} \cdot r_n(u_k,x_k) = V_n(x)
\end{align}
which completes the induction step. \qed

\section{Backward calculation of the two-dimensional nested stopping policy}
\label{bc}
We describe the process of calculating the threshold for each channel. Note at the last stage of the decision process there is no more channel to switch to; therefore the dynamic program degenerates to a standard rate-of-return problem. The standard optimal stopping rule thus applies and the details are omitted.  By going backward, at a subsequent stage $i<N$, the quantity $E\{V_{i+1}(X^{i+1})\}$ is available, and we have $V_i (x) = \max \{\hat{X}^{i},\frac{T}{t^c_{i}+T}\cdot E \{V_{i}(X^i)\}\}$.  We calculate $c_i$ as $c_i = \frac{T}{T+t^s_{i+1}}E\{V_{i+1}(X^{i+1})\}$, 
and obtain $\frac{\int_{c_i}^{\bar{X}^i}xf_{X^i}(x)dx+c_i\cdot P(X^i \leq c_i)}{1+{t}^c_i/T}$.
If the latter is less than  $c_i$, we are done and take this as the threshold $\lambda^*$. Otherwise, we proceed to a fixed-point equation
$
\lambda = \frac{\int_{\lambda}^{\bar{X}^i} xf_{X^i}(x)dx}{P(X^i \geq \lambda) + {t}^c_i/T}
$ which can be solved iteratively to obtain the threshold. 

\end{document}